\journal{Journal of \LaTeX\ Templates}
\newtheorem{theorem}{Theorem}
\newtheorem{lemma}{Lemma}
\newtheorem{corollary}{Corollary}
\newtheorem{example}{Example}
\newcommand{\C}{{\mathcal{C}}}
\newcommand{\F}{{\mathbb{F}}}
\begin{document}

\begin{frontmatter}

\title{Quantum MDS Codes with length $n\equiv 0,1($mod$\,\frac{q\pm1}{2})$}
\tnotetext[mytitlenote]{This research work is supported by the National Natural Science Foundation of China under Grant Nos. U21A20428 and 12171134.}

\author[mymainaddress]{Ruhao Wan}
\ead{wanruhao98@163.com}

\address[mymainaddress]{School of Mathematics, HeFei University of Technology, Hefei 230601, China}

\begin{abstract}
An important family of quantum codes is the quantum maximum-distance-separable (MDS) codes.
In this paper,
we construct some new classes of quantum MDS codes by generalized Reed-Solomon (GRS) codes and Hermitian construction.
In addition, the length $n$ of most of the quantum MDS codes we constructed satisfies $n\equiv 0,1($mod$\,\frac{q\pm1}{2})$, which is different from previously known code lengths.
At the same time, the quantum MDS codes we construct have large minimum distances that are greater than $q/2+1$.
\end{abstract}

\begin{keyword}
Quantum MDS codes \sep Hermitian construction \sep GRS codes
\end{keyword}

\end{frontmatter}

\section{Introduction}\label{sec1}

Quantum codes play an important role in quantum communication and quantum computing.
It is well known that well-constructed quantum codes are of great theoretical significance.
In \cite{RefJ (1998) introduct 1}, Calderbank et al. showed for the first time that
the construction of a quantum code can be reduced to the construction of a classical linear code with certain self-orthogonality properties over $\F_2$ or $\F_4$.
Then Rains \cite{RefJ (1999) introduct 2} and Ashikhmin et al. \cite{RefJ (2001) introduct 3} generalized to the non-binary case.
Afterward, the researchers gave many methods to obtain quantum codes with good parameters (see \cite{RefJ (2006) introduct 6}-\cite{RefJ (2015) n=q^2+1(2)}).
Later, related research gradually became a hot issue.

Let $q$ be a prime power.
We always use $[[n,k,d]]_q$ to denote a quantum code of length $n$, size $q^k$
and minimum distance $d$.
A quantum code with minimum distance $d$ can correct any $\lfloor \frac{d-1}{2}\rfloor$ quantum errors and detect any $d-1$ quantum errors.
Like classical codes, quantum codes with parameters $[[n,k,d]]_q$ need to satisfy the quantum Singleton bound:  $2d\leq n-k+2$ (see \cite{RefJ (2001) introduct 3}).
A quantum code achieving the above bound is called a quantum maximum-distance-separable (MDS) code.
As mentioned in the literature \cite{RefJ (2010) L.jin}, the minimum distance for almost all known $q$-ary quantum MDS codes is $d\leq q/2+1$, except for a few lengths.
Therefore, constructing quantum MDS codes with minimum distance $d>q/2+1$ is a valuable and interesting task.

\subsection{Related works}

In recent years,
the construction of quantum MDS codes has been extensively studied.
In \cite{RefJ (2004) n<q+1(1),RefJ (2004) n<q+1(2)}, the problem of constructing $q$-ary quantum MDS codes with $n\leq q+1$ has been solved.
There are two main ways to construct quantum MDS codes with good parameters, namely using constacyclic codes and GRS codes.
On one hand, in \cite{RefJ (2013) n=q^2+1(3),RefJ (2014) kai}, Kai et al. constructed several classes of quantum MDS codes by constacyclic codes and negacyclic codes.
Since then, researchers have constructed many quantum MDS codes  via
constacyclic codes (see \cite{RefJ (2014) Zhang}-\cite{RefJ (2015) B.chen}) and pseudo-cyclic codes (see \cite{RefJ (2016) S.Li}).
On other hand,
GRS codes are also widely used to construct quantum MDS codes due to their good algebraic properties.
In \cite{RefJ (2008) n=q^2+1(4)},
A unified framework for constructing quantum MDS codes via GRS codes was proposed by Li et al.
Then in \cite{RefJ (2012) L.jin,RefJ (2014) L.jin}, Jin et al. generalized the conclusion in \cite{RefJ (2008) n=q^2+1(4)} and constructed several classes of quantum MDS codes.
Since then, researchers have constructed many quantum MDS codes with distance greater than $q/2+1$ using GRS codes
(see \cite{RefJ (2016) X.He}-\cite{RefJ (2024) Fang}).
In particular, recently in \cite{RefJ (2021) n=q^2+1(5)}, Ball proved that the minimum distance of a quantum MDS code constructed by GRS is at most $q+1$.
We summarize some known results in Table \ref{tab:1} of Section \ref{sec5}.

\subsection{Our results}

In this paper, we construct some new classes of quantum MDS codes by Hermitian self-orthogonal GRS codes.
By selecting the suitable code locators and column multipliers, we construct several new classes of Hermitian self-orthogonal GRS codes, and thus we obtain several new classes of quantum MDS codes (see Theorems \ref{th 11}, \ref{th 22}, \ref{th 33}, \ref{th 44}).
We can find that the length of these quantum MDS codes satisfies $n\equiv 0,1($mod$\,\frac{q\pm1}{2})$, which is a length not considered in the literature.
The quantum MDS codes we construct are new and have large minimum distances.
We list our new constructions in Table \ref{tab:3} of Section \ref{sec5}.

\subsection{Organization of this paper}

The paper is organized as follows.
In Section \ref{sec2}, we briefly review the definitions of Hermitian self-orthogonality and GRS codes, and introduce some main lemmas.
In Section \ref{sec4}, we construct some new classes of quantum MDS codes.
In Section \ref{sec5}, we illustrate by comparison that our quantum MDS code has new parameters.
Section \ref{sec6} concludes the paper.

\section{Preliminaries}\label{sec2}

In this section, we introduce some basic definitions and useful results on Hermitian self-orthogonality and GRS codes, and introduce some main lemmas.
Let $\F_{q^2}$ be the finite field with $q^2$ elements, where $q$ is an odd prime power.
A linear code $\C$ of length $n$ over $\F_{q^2}$ is said to be an $[n,k,d]_{q^2}$ code if its
dimension is $k$ and its minimum distance is $d$.

\subsection{Hermitian self-orthogonality}

For any two vectors $\bm{x}=(x_1,x_2,\dots,x_n),\ \bm{y}=(y_1,y_2,\dots,y_n)\in \F_{q^2}^n$,
the Euclidean and Hermitian inner product of vectors $\bm{x}$, $\bm{y}$ are defined as
$$\langle \bm{x},\bm{y} \rangle=\sum_{i=1}^{n}x_iy_i\quad and\quad \langle \bm{x},\bm{y} \rangle_{H}=\sum_{i=1}^{n}x_iy_i^q,$$
respectively.
Let $\C$ be a linear code of length $n$ over $\F_{q^2}$. Then
$$\C^{\perp}=\{\bm{x}\in \F_{q^2}^n:\langle \bm{x}, \bm{y}\rangle=0, \ for \ all\ \bm{y}\in \C \}\quad and \quad
\C^{\perp_{H}}=\{\bm{x}\in \F_{q^2}^n:\langle \bm{x}, \bm{y}\rangle_{H}=0, \ for \ all\ \bm{y}\in \C \}$$
are called the the Euclidean and Hermitian dual code of $\C$, respectively.
If $\C\subseteq \C^{\perp_H}$, the code $\C$ is called Hermitian self-orthogonal.
For a vector $\bm{x}=(x_1,x_2,\dots,x_n)\in \F_{q^2}^n$,
let $\bm{x}^i=(x_1^i,x_2^i,\dots,x_n^i)$.
Specially, $0^0=1$.
For a linear code $\C$ over $\F_{q^2}$ with a generator matrix $G$,
we have $\C^{\bot_H}=(\C^{(q)})^{\bot}$, where $\C^{(q)}=\{\bm{c}^q:\bm{c}\in \C\}$.
For a matrix $A=(a_{ij})$, let $A^{(q)}$ be the matrix $(a_{ij}^q)$,
then we can obtain that $G^{(q)}$ is a generator matrix of $\C^{(q)}$.

\subsection{Generalized Reed-Solomon codes}

Choose two vectors $\bm{a}=(a_{1},a_{2},\dots,a_{n})$ and $\bm{v}=(v_1,v_2,\dots,v_n)$,
where $a_i$ are distinct elements in $\F_{q^2}$ and $v_i\in \F_{q^2}^*=\F_{q^2}\setminus \{0\}$ ($v_i$ may not be distinct).
For an integer $1\leq k\leq n$,
the GRS code associated with $\bm{a}$ and $\bm{v}$ is defined as
$$GRS_{k}(\bm{a},\bm{v})=\{(v_{1}f(a_{1}),v_{2}f(a_{2}),\dots,v_{n}f(a_{n})):f(x)\in \F_{q^2}[x],\ \deg(f(x))\leq k-1\}.$$
We can know that the code $GRS_{k}(\bm{a},\bm{v})$ is an MDS code with parameters $[n,k,n-k+1]_{q^2}$.
And its dual code is also an MDS code.
The elements $a_1, a_2,\dots,a_n$ are called the code locators of
$GRS_k(\bm{a},\bm{v})$,
and the elements $v_1,v_2,\dots,v_n$ are called the column multipliers of $GRS_k(\bm{a},\bm{v})$.

\subsection{Main Lemmas}

In \cite{RefJ (2014) L.jin,RefJ (2017)Lem GRS}, a necessary and sufficient condition for a GRS code to be Hermitian self-orthogonal is given.

\begin{lemma}\label{lem GRS panding}(\cite{RefJ (2014) L.jin,RefJ (2017)Lem GRS})
The two vectors $\bm{a}$ and $\bm{v}$
are defined as above.
Then $GRS_k(\bm{a},\bm{v})\subseteq GRS_k(\bm{a},\bm{v})^{\bot_H}$ if and only if $\langle \bm{a}^{qi+j}, \bm{v}^{q+1} \rangle=0$, for all $0\leq i,j \leq k-1$.
\end{lemma}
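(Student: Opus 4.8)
The plan is to unwind the characterization $\C\subseteq\C^{\perp_H}$ in terms of the generator matrix of $GRS_k(\bm a,\bm v)$ and then translate the resulting bilinear conditions into the inner-product equalities $\langle\bm a^{qi+j},\bm v^{q+1}\rangle=0$. First I would record the standard generator matrix of $GRS_k(\bm a,\bm v)$: its rows are $\bm v\cdot\bm a^{0},\bm v\cdot\bm a^{1},\dots,\bm v\cdot\bm a^{k-1}$, where $\bm v\cdot\bm a^{m}$ denotes the coordinatewise product $(v_1a_1^{m},\dots,v_na_n^{m})$. Using the identity $\C^{\perp_H}=(\C^{(q)})^{\perp}$ recalled in the preliminaries, the containment $\C\subseteq\C^{\perp_H}$ is equivalent to $\C^{(q)}\perp\C$ in the Euclidean inner product, i.e.\ every row of the generator matrix $G^{(q)}$ of $\C^{(q)}$ is Euclidean-orthogonal to every row of $G$.

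Next I would compute these pairings explicitly. The $i$-th row of $G^{(q)}$ is $\bm v^{q}\cdot\bm a^{qi}$ and the $j$-th row of $G$ is $\bm v\cdot\bm a^{j}$, for $0\le i,j\le k-1$. Their Euclidean inner product is
\[
\langle \bm v^{q}\cdot\bm a^{qi},\ \bm v\cdot\bm a^{j}\rangle=\sum_{t=1}^{n}v_t^{q+1}a_t^{qi+j}=\langle \bm a^{qi+j},\bm v^{q+1}\rangle .
\]
Since $G$ and $G^{(q)}$ generate $\C$ and $\C^{(q)}$ respectively, $\C^{(q)}\perp\C$ holds if and only if all of these row-by-row inner products vanish, i.e.\ if and only if $\langle\bm a^{qi+j},\bm v^{q+1}\rangle=0$ for all $0\le i,j\le k-1$. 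Combining this with the equivalence from the previous paragraph yields the lemma. I would also remark that $0^{0}=1$ is used so that the $m=0$ row $\bm v\cdot\bm a^{0}=\bm v$ is well defined even when some $a_t=0$.

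There is no serious obstacle here; the proof is essentially a bookkeeping exercise. The one point that needs a little care is the passage from "the generating matrices are mutually Euclidean-orthogonal" to "the codes are mutually Euclidean-orthogonal": one must note that orthogonality of two linear codes is equivalent to orthogonality of any pair of generating sets, because the inner product is bilinear, so vanishing on a spanning set forces vanishing everywhere. The only other thing to keep track of is the Frobenius bookkeeping, namely that raising the $i$-th row of $G$ to the $q$-th power turns $v_t a_t^{i}$ into $v_t^{q}a_t^{qi}$, which is exactly what makes the exponents $qi+j$ (rather than $i+j$) appear. Since $\bm a$ and $\bm v$ were already fixed in the statement, no further setup is required.
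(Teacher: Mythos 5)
Your proof is correct, and it is the standard argument: the paper does not actually prove this lemma but cites it from the references, and your reduction via $\C^{\perp_H}=(\C^{(q)})^{\perp}$, bilinearity on the spanning rows $\bm v\cdot\bm a^{j}$, and the computation $\langle \bm v^{q}\cdot\bm a^{qi},\,\bm v\cdot\bm a^{j}\rangle=\langle \bm a^{qi+j},\bm v^{q+1}\rangle$ is exactly how the cited sources establish it. No gaps.
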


In \cite{RefJ (2001) introduct 3}, Ashikhmin et al. gave a method to construct quantum codes by classical codes.

\begin{lemma}\label{Lem zhuyao}
(Hermitian construction \cite{RefJ (2001) introduct 3})
If there exists an $[n,k,n-k+1]_{q^2}$-Hermitian self-orthogonal code,
then there exists an $[[n,n-2k,k+1]]_{q}$-quantum code.
\end{lemma}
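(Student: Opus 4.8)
The plan is to split the argument into the determination of the quantum length and dimension, which is purely linear-algebraic, and the determination of the minimum distance, which uses the MDS hypothesis. For the first part I would invoke the general Hermitian stabilizer construction of \cite{RefJ (2001) introduct 3}: from any $\F_{q^2}$-linear code $\C$ of length $n$ with $\C\subseteq\C^{\perp_H}$ one obtains a quantum code with parameters $[[\,n,\ \dim\C^{\perp_H}-\dim\C,\ d\,]]_q$, where $d$ is the smallest Hamming weight occurring in $\C^{\perp_H}\setminus\C$ (and in $\C^{\perp_H}$ itself in the degenerate case $\C=\C^{\perp_H}$). Since the given code has $\dim\C=k$, the length is already $n$ and it only remains to compute $\dim\C^{\perp_H}$ and this minimum weight.

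For the dimension I would use the identity $\C^{\perp_H}=(\C^{(q)})^{\perp}$ from Section~\ref{sec2}, together with the fact (also recorded there, via $G^{(q)}$ being a generator matrix of $\C^{(q)}$) that $\C^{(q)}$ is again $\F_{q^2}$-linear of dimension $k$. Hence $\dim\C^{\perp_H}=n-k$, and the quantum dimension equals $(n-k)-k=n-2k$, as claimed.

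For the distance, the key observation is that the coordinatewise $q$-th power map is a weight-preserving bijection $\C\to\C^{(q)}$ (it annihilates only the zero coordinates), so $\C^{(q)}$ has the same parameters $[n,k,n-k+1]_{q^2}$ as $\C$ and is therefore MDS. Consequently its Euclidean dual $\C^{\perp_H}=(\C^{(q)})^{\perp}$ is also MDS, with parameters $[n,\ n-k,\ k+1]_{q^2}$. Thus every nonzero codeword of $\C^{\perp_H}$ --- in particular every codeword of $\C^{\perp_H}\setminus\C$ --- has Hamming weight at least $k+1$, so the resulting quantum code has minimum distance $d\ge k+1$. On the other hand the quantum Singleton bound gives $2d\le n-(n-2k)+2=2k+2$, i.e.\ $d\le k+1$; hence $d=k+1$ and we obtain an $[[n,n-2k,k+1]]_q$ quantum code.

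I expect the only real subtlety to be the correct use of the general construction of \cite{RefJ (2001) introduct 3}, namely that the quantum distance is controlled by $\C^{\perp_H}\setminus\C$ rather than by $\C^{\perp_H}$ when $\C\subsetneq\C^{\perp_H}$; here this distinction is immaterial, because $k+1$ is already the largest value allowed by the Singleton bound, so both sets have the same minimum weight $k+1$. Everything else reduces to the MDS property and the duality $\C^{\perp_H}=(\C^{(q)})^{\perp}$.
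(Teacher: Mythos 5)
Your argument is correct: the paper states this lemma as a known result of Ashikhmin--Knill without proof, and what you have written is the standard derivation --- dimension $n-2k$ from $\dim\C^{\perp_H}=n-k$ via $\C^{\perp_H}=(\C^{(q)})^{\perp}$, distance $\ge k+1$ because $\C^{(q)}$ (hence its Euclidean dual $\C^{\perp_H}$) is MDS, and equality from the quantum Singleton bound. Your remark that the $\C^{\perp_H}\setminus\C$ subtlety is immaterial here is also the right observation.
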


Given a quantum MDS code, we can obtain a new quantum code by the following lemma.

\begin{lemma}\label{lem propagation rule}
(Propagation Rule \cite{RefJ (2015) n=q^2+1(2)})
If there exists an $[[n,n-2k,k+1]]_q$-quantum MDS code,
then there exists an $[[n-1,n-2k+1,k]]_{q}$-quantum MDS code.
\end{lemma}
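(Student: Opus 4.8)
The plan is to reduce the statement to a classical shortening lemma together with the Hermitian construction of Lemma~\ref{Lem zhuyao}. Observe that an $[[n,n-2k,k+1]]_q$ quantum MDS code meeting the quantum Singleton bound is precisely what Lemma~\ref{Lem zhuyao} produces from an $[n,k,n-k+1]_{q^2}$ Hermitian self-orthogonal (hence MDS) code $\C$, and that every quantum MDS code fed into this rule in the present paper is of this form. Hence it suffices to produce, from such a $\C$, an $[n-1,k-1,n-k+1]_{q^2}$ Hermitian self-orthogonal code $\widetilde{\C}$: feeding $\widetilde{\C}$ into Lemma~\ref{Lem zhuyao} yields an $[[\,n-1,\,(n-1)-2(k-1),\,(k-1)+1\,]]_q=[[n-1,n-2k+1,k]]_q$ quantum code, which is again MDS since $2k=(n-1)-(n-2k+1)+2$. (For the rule without this structural assumption one invokes instead the standard puncturing propagation rule for stabilizer codes; see the last paragraph.)

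So the bulk of the work is the classical claim. First I would pick one coordinate, say the last, set $\C_0=\{\bm{c}\in\C:\ c_n=0\}$, and let $\widetilde{\C}$ be the length-$(n-1)$ code obtained by deleting that coordinate from every word of $\C_0$. Then I would check three things in turn. \emph{Dimension:} coordinate $n$ is not identically zero on $\C$, since otherwise the unit vector $\bm{e}_n$ would lie in $\C^{\perp}$, contradicting the fact that $\C^{\perp}$ is an $[n,n-k,k+1]_{q^2}$ MDS code of minimum weight $k+1\ge 2$ (the statement being vacuous unless $k\ge 1$); so shortening drops the dimension by exactly one and $\dim\widetilde{\C}=k-1$. \emph{Self-orthogonality:} from $\C_0\subseteq\C\subseteq\C^{\perp_H}$ any two words of $\C_0$ are Hermitian-orthogonal, and deleting a coordinate that equals $0$ in every word of $\C_0$ only removes the summand $0\cdot 0^{q}$ from the Hermitian inner product, so $\widetilde{\C}\subseteq\widetilde{\C}^{\perp_H}$. \emph{Minimum distance:} every nonzero word of $\widetilde{\C}$ is the restriction of a nonzero word of $\C$ with $c_n=0$ and has the same Hamming weight, so $d(\widetilde{\C})\ge d(\C)=n-k+1$, while the Singleton bound forces $d(\widetilde{\C})\le(n-1)-(k-1)+1=n-k+1$; hence $\widetilde{\C}$ is MDS. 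Applying Lemma~\ref{Lem zhuyao} to $\widetilde{\C}$ now gives the asserted $[[n-1,n-2k+1,k]]_q$ quantum MDS code.

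The step I expect to be the real obstacle is a clean proof of the rule in full generality, i.e.\ without assuming the input code comes from Lemma~\ref{Lem zhuyao}. There one passes to the stabilizer of the code --- equivalently, to an additive code over $\F_{q^2}$ that is self-orthogonal for the trace-Hermitian form --- and punctures it at one coordinate; the delicate points are that the punctured object is still a legitimate stabilizer code, that its dimension increases by exactly one, and that its minimum distance decreases by at most one. As in the linear case, the MDS hypothesis is exactly what makes both of the latter inequalities tight, through $2(d-1)=(n-1)-(K+1)+2$ with $(n,K,d)=(n,n-2k,k+1)$.
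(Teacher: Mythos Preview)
The paper does not supply its own proof of this lemma: it is stated with a citation to Grassl and R\"otteler and used as a black box. So there is nothing in the paper to compare your argument against line by line.

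That said, your restricted argument --- shorten the underlying $[n,k,n-k+1]_{q^2}$ Hermitian self-orthogonal code at one coordinate and reapply Lemma~\ref{Lem zhuyao} --- is correct and is exactly what is needed for the paper's purposes, since the rule is invoked only in Corollaries~\ref{cor 1} and~\ref{cor 2}, where the input codes are GRS codes produced by Theorem~\ref{th 11}. Your three checks (dimension drop via the MDS property of $\C^{\perp}$, preservation of Hermitian self-orthogonality under shortening, and the Singleton argument for the minimum distance) are all sound.

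Your closing paragraph correctly identifies the gap between this and the lemma as literally stated: the hypothesis is an arbitrary $[[n,n-2k,k+1]]_q$ quantum MDS code, not one known to arise from Lemma~\ref{Lem zhuyao}. The cited reference proves the rule at that level of generality by puncturing the stabilizer (trace-Hermitian self-orthogonal additive) code directly, which is the route you sketch. So your proposal is correct for the paper's needs and honestly flags where the full proof would require the stabilizer-code machinery; the paper itself simply defers that to the reference.
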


The next two lemmas give sufficient conditions for a system of linear equations to have a solution.

\begin{lemma}\label{lem two youjie 1}(\cite{RefJ (2023) R. Wan})
Suppose that $A$ is a $t\times n$ matrix over $\F_{q}$, where $1\leq t< n< q+1$.
Let $A_i$ be the $t\times (n-1)$ matrix obtained by deleting the $i$-th column from $A$.
If $rank(A)=rank(A_1)=\dots=rank(A_n)$,
then the following equation
$$A\bm{u}^T=\bm{0}^T$$
has a solution $\bm{u}=(u_1,u_2,\dots,u_{n})\in (\F_q^*)^{n}$.
\end{lemma}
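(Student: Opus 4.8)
The plan is to argue directly with the right null space of $A$ over $\F_q$, together with a short counting estimate. Write $r=\mathrm{rank}(A)$; since $t<n$ we have $r\le t<n$, so the solution set
$$V=\{\bm{u}\in\F_q^n:\ A\bm{u}^T=\bm{0}^T\}$$
is an $\F_q$-subspace with $\dim V=n-r\ge 1$. A solution of the type required by the statement is precisely a vector of $V$ none of whose coordinates vanishes, so it is enough to show that $V$ is not covered by the $n$ subspaces cut out by the vanishing of the individual coordinates.

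Next I would convert the rank hypothesis into a statement about $V$. Removing the $i$-th column of $A$ can decrease the rank by at most $1$, and it leaves the rank unchanged exactly when the $i$-th column of $A$ lies in the linear span of the other $n-1$ columns. Hence $\mathrm{rank}(A_i)=\mathrm{rank}(A)$ yields a linear dependence among the columns of $A$ in which the $i$-th column appears with a nonzero coefficient; reading this dependence as an element of $V$ gives a vector in $V$ with nonzero $i$-th coordinate. Equivalently, for every $i$ the subspace
$$H_i=\{\bm{u}\in V:\ u_i=0\}$$
is a \emph{proper} subspace of $V$, so $|H_i|\le q^{\dim V-1}$.

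The final step is the count. Since each $H_i$ contains $\bm{0}$,
$$\Big|\bigcup_{i=1}^{n}H_i\Big|\ \le\ 1+\sum_{i=1}^{n}\big(|H_i|-1\big)\ \le\ 1+n\big(q^{\dim V-1}-1\big).$$
Using $n<q+1$, i.e.\ $n\le q$, and $\dim V\ge 1$, the right-hand side is at most $1+q\big(q^{\dim V-1}-1\big)=q^{\dim V}-(q-1)<q^{\dim V}=|V|$. Therefore $V\setminus\bigcup_{i=1}^{n}H_i\ne\emptyset$, and any vector in this set is a solution $\bm{u}=(u_1,\dots,u_n)\in(\F_q^*)^n$ of $A\bm{u}^T=\bm{0}^T$.

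The crux is the conversion step: one must use the full hypothesis that \emph{every} $\mathrm{rank}(A_i)$ equals $\mathrm{rank}(A)$ (not merely $\mathrm{rank}(A)<n$) to guarantee each $H_i$ is proper, and then check that the closing inequality remains strict in the extreme case $n=q$ — which it does only because all the $H_i$ share the zero vector, so the union loses $n-1$ elements to overlaps at $\bm{0}$. One could instead invoke the classical fact that no vector space over $\F_q$ is a union of $q$ or fewer proper subspaces, but the explicit bound above is self-contained and is all that is needed.
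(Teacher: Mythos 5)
Your argument is correct and complete: the rank hypothesis does exactly translate into each coordinate hyperplane $H_i=\{\bm{u}\in V:u_i=0\}$ being a proper subspace of the kernel $V$, and your count $\bigl|\bigcup_i H_i\bigr|\le 1+n(q^{\dim V-1}-1)<q^{\dim V}$ using $n\le q$ is valid (it is the explicit form of the fact that a vector space over $\F_q$ is not a union of $q$ or fewer proper subspaces). Note that this paper does not actually prove the lemma --- it quotes it from the cited reference \cite{RefJ (2023) R. Wan} --- but your proof is essentially the standard argument used there, so nothing further is needed.
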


\begin{lemma}\label{lem two youjie 2}(\cite{RefJ (2023) R. Wan})
Suppose that $A$ is a $t\times n$ matrix over $\F_{q^2}$, where $1\leq t\leq n< q+1$.
If the following conditions are met:
\begin{itemize}
\item[(1)] the equation $A\bm{u}^T=\bm{0}^T$ has a solution $\bm{u}=(u_1,u_2,\dots,u_{n})\in (\F_{q^2}^*)^{n}$;
\item[(2)]$A^{(q)}$ is row equivalent to $A$,
\end{itemize}
then the following equation
$$A\bm{u}^T=\bm{0}^T$$
has a solution $\bm{u}=(u_1,u_2,\dots,u_{n})\in (\F_q^*)^{n}$.
\end{lemma}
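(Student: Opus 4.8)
The plan is to use condition (2) to show that the solution space is defined over $\F_q$, and then to reduce the search for a full-support solution to a counting argument on the projective line $\mathbb{P}^1(\F_q)$.

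Set $W=\{\bm u\in\F_{q^2}^n:A\bm u^T=\bm 0^T\}$, an $\F_{q^2}$-subspace of $\F_{q^2}^n$. First I would reformulate the hypotheses in terms of $W$. Since $A^{(q)}$ and $A$ are row equivalent, they have the same null space, so $\ker A^{(q)}=W$. On the other hand, raising the identity $A\bm u^T=\bm 0^T$ to the $q$-th power coordinatewise shows that $\bm u\in W$ implies $\bm u^{(q)}\in\ker A^{(q)}$; since $\mathrm{rank}(A^{(q)})=\mathrm{rank}(A)$, the two spaces $\ker A^{(q)}$ and $W^{(q)}=\{\bm u^{(q)}:\bm u\in W\}$ have the same dimension, hence coincide. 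Combining, $W^{(q)}=W$; that is, $W$ is stable under coordinatewise Frobenius.

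Next I would run a Galois-descent step. Fix $\omega\in\F_{q^2}\setminus\F_q$, and let $\bm u=(u_1,\dots,u_n)\in(\F_{q^2}^*)^n\cap W$ be the solution supplied by condition (1); write $\bm u=\bm x+\omega\bm y$ with $\bm x,\bm y\in\F_q^n$. Applying Frobenius gives $\bm u^{(q)}=\bm x+\omega^q\bm y\in W^{(q)}=W$, so $(\omega-\omega^q)\bm y=\bm u-\bm u^{(q)}\in W$; as $\omega-\omega^q\neq 0$ and $W$ is an $\F_{q^2}$-subspace, $\bm y\in W$, and then $\bm x=\bm u-\omega\bm y\in W$ too. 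Thus $\bm x,\bm y\in W\cap\F_q^n$, and every combination $a\bm x+b\bm y$ with $a,b\in\F_q$ is again a solution of $A\bm u^T=\bm 0^T$ having all entries in $\F_q$.

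Finally I would choose $(a,b)$ so that all coordinates of $a\bm x+b\bm y$ are nonzero. Because $u_i=x_i+\omega y_i\neq 0$ for every $i$, we have $(x_i,y_i)\neq(0,0)$, so for each $i$ the pairs $(a,b)$ annihilating the $i$-th coordinate $ax_i+by_i$ form a single line through the origin of $\F_q^2$, i.e.\ a single point of $\mathbb{P}^1(\F_q)$. There are at most $n$ such forbidden points, and $|\mathbb{P}^1(\F_q)|=q+1>n$ by hypothesis, so some $[a:b]$ avoids them all; the corresponding $\bm v=a\bm x+b\bm y$ then lies in $(\F_q^*)^n$ and satisfies $A\bm v^T=\bm 0^T$, as required. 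The one delicate point is the descent step, where one must place $\bm x$ and $\bm y$ individually inside $W$ rather than merely their $\F_{q^2}$-span; and it is the use of the projective line, with its $q+1$ points, in place of the affine line that lets the argument cover the extremal case $n=q$.
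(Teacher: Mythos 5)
Your proof is correct. The paper does not actually prove this lemma---it is imported from the cited Wan--Zhu preprint---but your argument (Frobenius-stability of the null space $W$ forced by row equivalence of $A$ and $A^{(q)}$, Galois descent of the given solution to $\bm{x},\bm{y}\in W\cap\F_q^n$, and a covering count pitting at most $n$ forbidden points of $\mathbb{P}^1(\F_q)$ against its $q+1$ points) is exactly the standard route for this statement, and it correctly handles the extremal case $n=q$.
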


\section{The construction of quantum MDS codes}\label{sec4}

In this section,
by  Hermitian self-orthogonal GRS codes we construct quantum MDS codes of length satisfying $n\equiv 0,1($mod$\,\frac{q\pm1}{2})$.

We assume that $q$ is an odd prime power and $q^2-1=2hm$.
For brevity,
let $\theta$ be a primitive element of $\F_{q^2}$
and $\gamma=\theta^{2h}$, then $\gamma$ is a primitive $m$-th root of unity.
Let $\alpha=\theta^{m}$ and $\beta=\theta^{2m}$, then we have $\alpha^{v}\neq \alpha^{v'}$ for $1\leq v\neq v'\leq 2h$, and $\beta^{u}\neq \beta^{u'}$ for $1\leq u\neq u'\leq h$.
Denote
$(\langle \gamma\rangle)=(1,\gamma,\dots,\gamma^{m-1})$
and $\bm{1}_n$ be the all-one row vector with length $n$.
Assume $\xi=\theta^{-\frac{q+1}{2}}$, then we have $\xi^q=-\xi$.
For ease of representation, we define $[\ ]$ as follows,
\[[a,b]=\begin{cases}
\emptyset, & if\ a>b; \\
\{a,b\}, & if\  a\leq b.
\end{cases}\]

\subsection{quantum MDS codes of length $n=r\frac{q^2-1}{2h}+1$, where $\frac{q-1}{h}$ is odd}

Throughout this section,
let $h$ be an even integer such that $\frac{q-1}{h}=2\tau+1$ for some $\tau \geq 1$.

\begin{lemma}\label{lem sec 4 1}
Let $q$, $h$, $\tau$ and $m$ be defined as above.
Suppose $1\leq k\leq (h+2t+1)\frac{q-1}{2h}+\frac{1}{2}$, where $0\leq t\leq \frac{h}{2}-1$.
Then for $0\leq i,j\leq k-1$, $qi+j=sm$ if and only if
$s\in \{0\}\cup[1,3,\dots,2t-1]\cup[h+1,h+3,\dots,h+2t-1]\cup \{2, 4,\dots, h+2t\}$.
\end{lemma}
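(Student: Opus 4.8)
The plan is to translate the divisibility condition $qi+j=sm$ into a counting problem over the admissible ranges of $i$ and $j$, using the fact that $m=\frac{q^2-1}{2h}=\frac{(q-1)(q+1)}{2h}$ and $q-1=h(2\tau+1)$, so that $m=\frac{(2\tau+1)(q+1)}{2}$. First I would fix the bound $k-1\le(h+2t+1)\frac{q-1}{2h}-\frac12$ and note that since $0\le i,j\le k-1$ we have $0\le qi+j\le(q+1)(k-1)<(q+1)(h+2t+1)\frac{q-1}{2h}=(h+2t+1)m$. Hence any admissible value $s$ must satisfy $0\le s\le h+2t$. So the real task is to decide, for each $s$ in this range, whether there exist $i,j$ with $0\le i,j\le k-1$ and $qi+j=sm$.

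The key step is the division-with-remainder analysis: writing $sm=q\cdot\lfloor sm/q\rfloor+(sm\bmod q)$, the equation $qi+j=sm$ has a solution with $0\le i,j\le k-1$ precisely when $i=\lfloor sm/q\rfloor$ and $j=sm\bmod q$ both lie in $[0,k-1]$ (uniqueness of the $q$-ary digit representation, using $j<q\le k$ for the relevant $k$, or more carefully $j\le q-1$ and $i\le\lfloor sm/q\rfloor$). So I would compute $sm\bmod q$ and $\lfloor sm/q\rfloor$ explicitly. Using $m=\frac{(2\tau+1)(q+1)}{2}$ and reducing modulo $q$, one gets $sm\equiv s\cdot\frac{2\tau+1}{2}\pmod q$ — here $\frac{2\tau+1}{2}$ means $(2\tau+1)\cdot 2^{-1}$ in $\Z/q\Z$, and since $q$ is odd this makes sense; concretely $2^{-1}\equiv\frac{q+1}{2}\pmod q$. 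The parity of $s$ controls whether $sm$ is "close to" a multiple of $q$ from below or forces a large remainder: when $s$ is even, say $s=2l$, then $sm=l(2\tau+1)(q+1)=l(2\tau+1)q+l(2\tau+1)$, so $i=l(2\tau+1)+(\text{small})$ and $j=l(2\tau+1)$, both easily within range; when $s$ is odd one picks up an extra $\frac{q+1}{2}$ in the remainder, which is large, so $j$ is within $[0,k-1]$ only when $k$ is large enough, i.e. only for the smaller odd values of $s$, which is exactly what produces the truncated sets $[1,3,\dots,2t-1]$ and $[h+1,h+3,\dots,h+2t-1]$ rather than all odd $s$ up to $h+2t$.

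Concretely I would split into the cases $s$ even and $s$ odd. For $s$ even the bound $k-1\ge(h+2t+1)\frac{q-1}{2h}-\frac12$ should comfortably cover all even $s\le h+2t$, giving the full set $\{2,4,\dots,h+2t\}$ together with $s=0$. For $s$ odd, I would compute $j=sm\bmod q$ and impose $j\le k-1$; the inequality $(h+2t+1)\frac{q-1}{2h}+\frac12$ for $k$ will translate, after substituting $\frac{q-1}{h}=2\tau+1$, into a threshold on $s$ of the form $s\le 2t-1$ or (for $s$ past a "wrap-around" point near $h$) $h+1\le s\le h+2t-1$. Checking both odd sub-ranges and confirming $i=\lfloor sm/q\rfloor\le k-1$ as well (this is the easier of the two bounds since $i<j$-type estimates hold) completes the forward direction; the converse is immediate because for each listed $s$ the explicit $(i,j)$ constructed is seen to satisfy both bounds.

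The main obstacle I expect is bookkeeping the exact arithmetic of $sm\bmod q$ and $\lfloor sm/q\rfloor$ correctly across the parity split and the wrap-around at $s\approx h$: one must be careful that $sm$ can exceed $q\cdot(k-1)$ for the wrong $s$, and that the half-integer $+\frac12$ in the bound on $k$ interacts correctly with the fact that $m$ is an integer while $\frac{q+1}{2}$ and $2\tau+1$ have specific parities. Getting the endpoints of the sets $[1,3,\dots,2t-1]$ and $[h+1,h+3,\dots,h+2t-1]$ exactly right — rather than off by one — is where the delicacy lies, and it hinges on the precise constant $(h+2t+1)\frac{q-1}{2h}+\frac12$ in the hypothesis.
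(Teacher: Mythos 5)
Your proposal is correct and follows essentially the same route as the paper: write $sm=qi+j$ by division with remainder (using $m=\frac{(2\tau+1)(q+1)}{2}$), split on the parity of $s$ and on whether $s$ lies below or above the wrap-around at $h$, and compare the explicit quotient and remainder against the bound on $k$. One small correction to your bookkeeping: for odd $s$ in the upper range $h+1\le s\le 2h-1$ the remainder $j=\frac{s-h}{2}\cdot\frac{q-1}{h}-\frac{1}{2}$ is small and it is the quotient bound $i=\frac{s}{2}\cdot\frac{q-1}{h}+\frac{1}{2}\le k-1$ that produces the threshold $s\le h+2t-1$, so the $i$-constraint is not uniformly ``the easier of the two'' as you suggest.
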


\begin{proof}
Note that $0\leq i,j< q-1$,
we have $0\leq qi+j< q^2-1$.
If $qi+j=sm=s\frac{q^2-1}{2h}$,
then $0\leq s< 2h$.
Now, we divide our proof into the following two parts according to whether $s$ is even or odd.

$\bullet$ $\textbf{Case 1:}$ $s$ is odd.

For $1\leq s\leq h-1$,
we have
\[\begin{split} qi+j&
=(\frac{s-1}{2}\frac{q-1}{h})q+\frac{q+1}{2}\frac{q-1}{h}+\frac{s-1}{2}\frac{q-1}{h}\\
                           &=(\frac{s-1}{2}\frac{q-1}{h}+\tau)q+\frac{q+1}{2}+\frac{s-1}{2}\frac{q-1}{h}+\tau\\
                          &=(\frac{s}{2}\frac{q-1}{h}-\frac{1}{2})q+\frac{h+s}{2}\frac{q-1}{h}+\frac{1}{2}.\\
	\end{split}\]
It follows that
$$i=\frac{s}{2}\frac{q-1}{h}-\frac{1}{2},\quad j=\frac{h+s}{2}\frac{q-1}{h}+\frac{1}{2}.$$

If $2t+1\leq s\leq h-1$, then
$$j=\frac{h+s}{2}\frac{q-1}{h}+\frac{1}{2}\geq (h+2t+1)\frac{q-1}{2h}+\frac{1}{2}\geq k,$$
which contradicts to the fact $j\leq k-1$;

For $h+1\leq s\leq 2h-1$,
we have
\[\begin{split} qi+j&=(\frac{s-1}{2}\frac{q-1}{h}+\tau+1)q+\frac{q+1}{2}+\frac{s-1}{2}\frac{q-1}{h}+\tau-q\\
                           &=(\frac{s}{2}\frac{q-1}{h}+\frac{1}{2})q+\frac{s-h}{2}\frac{q-1}{h}-\frac{1}{2}.\\
	\end{split}\]
It follows that
$$i=\frac{s}{2}\frac{q-1}{h}+\frac{1}{2},\quad j=\frac{s-h}{2}\frac{q-1}{h}-\frac{1}{2}.$$

If $h+2t+1\leq s\leq 2h-1$, then
$$i=\frac{s}{2}\frac{q-1}{h}+\frac{1}{2}\geq (h+2t+1)\frac{q-1}{2h}+\frac{1}{2}\geq k,$$
which contradicts to the fact $i\leq k-1$.
Hence, $s\in [1,3,\dots,2t-1]\cup[h+1,h+3,\dots,h+2t-1]$.

$\bullet$ $\textbf{Case 2:}$ $s$ is even.

For $0\leq s\leq 2h-2$, we have
$$qi+j=\frac{s}{2}\frac{q-1}{h}q+\frac{s}{2}\frac{q-1}{h}.$$
It follows that $i=j=\frac{s(q-1)}{2h}$.
If $s\geq h+2t+2$, then
$$i=j=\frac{s}{2}\frac{q-1}{h}\geq (h+2t+2)\frac{q-1}{2h}\geq (h+2t+1)\frac{q-1}{2h}+\frac{1}{2}\geq k,$$
which contradicts to the fact $i\leq k-1$.
Hence, $s\in \{0,2, 4,\dots, h+2t\}$.
This completes the proof.
\end{proof}

\begin{theorem}\label{th 11}
Let $q$ be an odd integer and $n=\frac{r(q^2-1)}{2h}+1$, where $\frac{q-1}{h}=2\tau+1$ for some $\tau\geq 1$.
The following statements hold.
\begin{itemize}
\item[(1)] For $\frac{h}{2}+1\leq r\leq h$ and $1\leq k\leq (h+1)\frac{q-1}{2h}+\frac{1}{2}$,
    there exists a quantum MDS code with parameters $[[n,n-2k,k+1]]_q$.
\item[(2)] For odd $h< r< 2h$ and $1\leq k\leq r\frac{q-1}{2h}+\frac{1}{2}$,
  there exists a quantum MDS code with parameters $[[n,n-2k,k+1]]_q$.
\end{itemize}
\end{theorem}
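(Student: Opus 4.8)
The plan is to realize the quantum MDS code via the Hermitian construction (Lemma \ref{Lem zhuyao}) by exhibiting an $[n,k,n-k+1]_{q^2}$ Hermitian self-orthogonal GRS code for the appropriate range of $k$. By Lemma \ref{lem GRS panding}, it suffices to find code locators $\bm{a}=(a_1,\dots,a_n)$ (distinct elements of $\F_{q^2}$) and column multipliers $\bm{v}=(v_1,\dots,v_n)\in(\F_{q^2}^*)^n$ with $\langle\bm{a}^{qi+j},\bm{v}^{q+1}\rangle=0$ for all $0\le i,j\le k-1$. I would take the locator set to be the union of $r$ cosets of the subgroup $\langle\gamma\rangle$ of $m$-th roots of unity, namely $a$'s running over $\alpha^{\ell}\gamma^{b}$ for $b=0,\dots,m-1$ and $\ell$ in a suitable $r$-element subset of $\{1,\dots,2h\}$, together with the extra point $0$ (accounting for the ``$+1$'' in $n$ and the convention $0^0=1$). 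Writing $\bm{v}^{q+1}$ blockwise over these cosets, the inner product $\langle\bm{a}^{qi+j},\bm{v}^{q+1}\rangle$ collapses: the sum over each coset of $\gamma^{b\cdot(qi+j)}$ vanishes unless $m\mid qi+j$, so only the exponents $qi+j=sm$ survive, and Lemma \ref{lem sec 4 1} tells us exactly which values of $s$ occur. This reduces the infinitely many conditions to a finite linear system in the (coset-constant) values of $\bm{v}^{q+1}$.

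Next I would set up that finite system explicitly. For each surviving $s$ from Lemma \ref{lem sec 4 1}, the condition becomes a linear equation $\sum_{\ell} w_\ell\,\alpha^{\ell s}=0$ (plus a contribution from the point $0$ only when $s=0$), where $w_\ell$ is the common value of $v^{q+1}$ on the $\ell$-th coset; this is a generalized-Vandermonde system of size $t\times r$ for some $t<r$ in the unknowns $w_1,\dots,w_r$ (and one more unknown for the weight at $0$). I would apply Lemma \ref{lem two youjie 1} to produce a solution with all $w_\ell\in\F_q^*$: the rank hypothesis ``$\mathrm{rank}(A)=\mathrm{rank}(A_1)=\cdots$'' is verified because any maximal square submatrix is a nonsingular Vandermonde-type minor in the distinct elements $\alpha^{\ell}$ (here the hypotheses $\alpha^v\ne\alpha^{v'}$ for $1\le v\ne v'\le 2h$ are exactly what is needed), so deleting any one column does not drop the rank. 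Then, since $\alpha^{q}$ is again a power of $\alpha$, the matrix $A^{(q)}$ is a column permutation of $A$ and hence row equivalent to it; Lemma \ref{lem two youjie 2} (or directly Lemma \ref{lem two youjie 1}) upgrades this to a solution with all entries in $\F_q^*$. Finally, each $w_\ell\in\F_q^*\subseteq(\F_{q^2}^*)^{q+1}$ is a $(q+1)$-th power in $\F_{q^2}^*$, so we may extract $v_i\in\F_{q^2}^*$ with $v_i^{q+1}=w_{\ell(i)}$, giving the desired $\bm{v}$. This produces the Hermitian self-orthogonal $[n,k]_{q^2}$ GRS code, hence by Lemma \ref{Lem zhuyao} an $[[n,n-2k,k+1]]_q$ quantum MDS code; the two separate ranges of $(r,k)$ in statements (1) and (2) correspond to plugging $t=\frac{h}{2}-1$ (for $r$ near $h$, case (1)) or the appropriate $t$ forcing only $s=0,2,\dots$ to survive (for odd $r$ with $h<r<2h$, case (2)) into the count of Lemma \ref{lem sec 4 1}, and to checking in each case that the number of surviving equations $t$ stays strictly below $n<q^2$... more precisely below the bound $q+1$ required by Lemmas \ref{lem two youjie 1}--\ref{lem two youjie 2}.

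The main obstacle is the bookkeeping in the second and third steps: one must pin down, for each of the two parameter regimes, exactly which coset-representatives $\alpha^{\ell}$ to include so that (a) the locators remain distinct, (b) the surviving index set from Lemma \ref{lem sec 4 1} yields a consistent and solvable linear system, and (c) the solvability hypotheses of Lemmas \ref{lem two youjie 1} and \ref{lem two youjie 2} genuinely hold — in particular that the relevant submatrices are nonsingular and that the size constraint (number of equations $<q+1$) is respected for the full claimed range of $k$. Verifying that the generalized-Vandermonde minors are nonzero is routine given the distinctness of the $\alpha^{\ell}$, but matching the combinatorial shape of the surviving $s$-set to the two different upper bounds on $k$ (and handling the contribution of the locator $0$ to the $s=0$ equation, which shifts that one equation and must not destroy solvability) is where the real care is needed.
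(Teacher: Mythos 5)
Your proposal follows essentially the same route as the paper: the same choice of locators ($0$ together with $r$ cosets of $\langle\gamma\rangle$) and coset-constant multipliers, reduction via Lemma \ref{lem GRS panding} and Lemma \ref{lem sec 4 1} to a generalized Vandermonde system, solvability over $\F_q^*$ via Lemmas \ref{lem two youjie 1} and \ref{lem two youjie 2}, and surjectivity of the norm to recover $\bm{v}$. The only slip is that $A^{(q)}$ is a \emph{row} permutation of $A$ (since $q\equiv h+1\pmod{2h}$ sends each surviving exponent $s$ to $s$ or $h+s$, and the surviving set is closed under this map), not a column permutation --- which is exactly the row equivalence needed for Lemma \ref{lem two youjie 2}, so the argument goes through as the paper's does.
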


\begin{proof}
Let
$$\bm{a}=(0,\theta^{i_1}(\langle \gamma\rangle),\theta^{i_2}(\langle \gamma\rangle),\dots,\theta^{i_r}(\langle \gamma\rangle))\in \F_{q^2}^{r\frac{q^2-1}{2h}+1},$$
where $i_1,i_2,\dots,i_r$ are distinct modulo $2h$ and
$$\bm{v}=(v_0,v_1\bm{1}_{\frac{q^2-1}{2h}},v_2\bm{1}_{\frac{q^2-1}{2h}},\dots,v_r\bm{1}_{\frac{q^2-1}{2h}})\in (\F_{q^2}^*)^{r\frac{q^2-1}{2h}+1},$$
where $v_0,v_1,\dots,v_r\in \F_{q^2}^*$.
Then when $(i,j)=(0,0)$,
we have
$$\langle \bm{a}^{0}, \bm{v}^{q+1}\rangle=v_0^{q+1}+m\sum_{l=1}^{r}v_l^{q+1}.$$
And when $(i,j)\neq (0,0)$,
we have
$$\langle \bm{a}^{qi+j},\bm{v}^{q+1}\rangle=\sum_{l=1}^r\theta^{i_l(qi+j)}v_l^{q+1}\sum_{\nu=0}^{m-1}\gamma^{\nu(qi+j)},$$
thus
\[\langle \bm{a}^{qi+j},\bm{v}^{q+1}\rangle=\begin{cases}
0, & if\ m\nmid (qi+j); \\
m\sum_{l=1}^r\theta^{i_l(qi+j)}v_l^{q+1}, & if\  m\mid (qi+j).
\end{cases}\]

(1) Since  $1\leq k\leq (h+1)\frac{q-1}{2h}+\frac{1}{2}$, where $t=0$,
by Lemma \ref{lem sec 4 1}, when $m\mid (qi+j),\ (i,j)\neq (0,0)$,
we have
$\langle \bm{a}^{qi+j},\bm{v}^{q+1}\rangle =m\sum_{l=1}^r\alpha^{i_ls}v_l^{q+1}$, where $s\in \{2,4,\dots, h\}$.
Since $r\leq h$, we can assume that $i_1,i_2,\dots,i_r$ are distinct modulo $h$.
Note that $\beta^q=\beta$, then $\beta\in \F_{q}$.
Let
$$A=\begin{pmatrix}
 1&  1    &  1  &  \dots  & 1  \\
 0& \beta^{i_1} &  \beta^{i_2} & \dots & \beta^{i_r}\\
 \vdots & \vdots &  \vdots & \ddots & \vdots\\
 0& \beta^{i_1\frac{h}{2}} &  \beta^{i_2\frac{h}{2}} & \dots & \beta^{i_r\frac{h}{2}}\\
\end{pmatrix}$$
be a $(1+\frac{h}{2})\times (r+1)$ matrix over $\F_{q}$.
Let $A_i\ (1\leq i\leq r+1)$ be the $(1+\frac{h}{2})\times r$ matrix obtained by deleting the $i$-th column from $A$.
We can get $rank(A)=rank(A_1)=\dots=rank(A_{r+1})=1+\frac{h}{2}$.
By Lemma \ref{lem two youjie 1}, the equation $A\bm{u}^T=\bm{0}^{T}$ has a solution $\bm{u}=(u_0,u_1,\dots,u_r)\in (\F_{q}^*)^{r+1}$.

(2)
Since $1\leq k\leq (h+2t+1)\frac{q-1}{2h}+\frac{1}{2}$, where $t=\frac{r-h-1}{2}$,
by Lemma \ref{lem sec 4 1},
when $m\mid (qi+j),\ (i,j)\neq (0,0)$, we have
$\langle \bm{a}^{qi+j},\bm{v}^{q+1}\rangle =m\sum_{l=1}^r\alpha^{i_ls}v_l^{q+1},$
where $s\in [1,3,\dots,2t-1]\cup [h+1,h+3,\dots,h+2t-1]\cup \{2, 4,\dots, h+2t\}$.
Let
$$B=\begin{pmatrix}
 \alpha^{i_1}  &  \alpha^{i_2}  & \dots & \alpha^{i_r}\\
 \alpha^{3i_1}  &  \alpha^{3i_2}  & \dots & \alpha^{3i_r}\\
  \vdots & \vdots &  \ddots & \vdots\\
 \alpha^{(2t-1)i_1}  &  \alpha^{(2t-1)i_2}  & \dots & \alpha^{(2t-1)i_r}\\
 \alpha^{(h+1)i_1}  &  \alpha^{(h+1)i_2}  & \dots & \alpha^{(h+1)i_r}\\
 \alpha^{(h+3)i_1}  &  \alpha^{(h+3)i_2}  & \dots & \alpha^{(h+3)i_r}\\
  \vdots & \vdots &  \ddots & \vdots\\
 \alpha^{(h+2t-1)i_1}  &  \alpha^{(h+2t-1)i_2}  & \dots & \alpha^{(h+2t-1)i_r}\\
\end{pmatrix},$$
$$
C=\begin{pmatrix}
   \alpha^{2i_1}  &  \alpha^{2i_2}  & \dots & \alpha^{2i_r}\\
    \alpha^{4i_1}  &  \alpha^{4i_2}  & \dots & \alpha^{4i_r}\\
  \vdots & \vdots &  \ddots & \vdots\\
  \alpha^{(h+2t)i_1}  &  \alpha^{(h+2t)i_2}  & \dots & \alpha^{(h+2t)i_r}\\
\end{pmatrix}
\quad and\quad A=\begin{pmatrix}
1 & \bm{1}_r\\
\bm{0}^T & B\\
\bm{0}^T & C\\
\end{pmatrix}
$$
be the matrices over $\F_{q^2}$, respectively.
Let
$$\tilde{A}=\begin{pmatrix}
1& 1& 1& \dots &1\\
0& \alpha^{i_1}  &  \alpha^{i_2}  & \dots & \alpha^{i_r}\\
0&  \alpha^{2i_1}  &  \alpha^{2i_2} \ & \dots & \alpha^{2i_r} \\
\vdots &    \vdots & \vdots &  \ddots & \vdots\\
  0&   \alpha^{(h+2t)i_1} &  \alpha^{(h+2t)i_2} & \dots & \alpha^{(h+2t)i_r}\\
\end{pmatrix}$$
be an $r\times (r+1)$ matrix over $\F_{q^2}$.
Let $\tilde{A}_i\ (1\leq i\leq r+1)$ be the $r\times r$ matrix obtained by deleting the $i$-th column from $\tilde{A}$.
We can get $rank(\tilde{A})=rank(\tilde{A}_1)=\dots=rank(\tilde{A}_{r+1})=r$.
By Lemma \ref{lem two youjie 1}, the equation $\tilde{A}\bm{u}^T=\bm{0}^{T}$ has a solution $\bm{u}=(u_0,u_1,\dots,u_r)\in (\F_{q^2}^*)^{r+1}$.
Then the equation $A\bm{u}^T=\bm{0}^{T}$ has a solution $\bm{u}=(u_0,u_1,\dots,u_r)\in (\F_{q^2}^*)^{r+1}$.
Note that $\alpha^h=-1$, $\alpha^q=-\alpha$ and $\alpha^{2h}=1$, then for any odd number $1\leq s\leq 2t-1$, and for any $1\leq j\leq r$, we have
\[\begin{split} \alpha^{si_jq}&=(-\alpha^s)^{i_j}\\
                           &=\alpha^{(h+s)i_j}.\\
	\end{split}\]
And for any even number $2\leq s\leq h+2t$, and for any $1\leq j\leq r$, we have
$\alpha^{si_jq}=\alpha^{si_j}$.
It follows that $A$ is row equivalent to $A^{(q)}$.
By Lemma \ref{lem two youjie 2}, the equation $A\bm{u}^T=\bm{0}^{T}$ has a solution $\bm{u}=(u_0,u_1,\dots,u_r)\in (\F_{q}^*)^{r+1}$.

Combine (1) and (2). Let $v_l^{q+1}=u_l$ for $1\leq l\leq r$ and let $v_0\in \F_{q^2}^*$ such that $v_0^{q+1}=u_0m$.
We have
$$\langle \bm{a}^{0},\bm{v}^{q+1}\rangle=v_0^{q+1}+m\sum_{l=1}^rv_l^{q+1}=m\sum_{l=0}^{r}u_l=0.$$
And when $m\mid qi+j$, we have
$$\langle \bm{a}^{qi+j},\bm{v}^{q+1}\rangle=m\sum_{l=1}^{r}\alpha^{i_ls}v_l^{q+1}=m\sum_{l=1}^{r}\alpha^{i_ls}u_l=0.$$
Hence $\langle \bm{a}^{qi+j},\bm{v}^{q+1}\rangle=0$, for all $0\leq i,j\leq k-1$.
By Lemmas \ref{lem GRS panding} and \ref{Lem zhuyao}, we have the conclusion.
This completes the proof.
\end{proof}

\begin{example}
By taking $h=8,12$ in Theorem \ref{th 11}, we get the following quantum MDS codes.
\begin{itemize}
\item[(1)]
When $\frac{q-1}{8}=2\tau+1$ in Theorem \ref{th 11} (1),
then there exists a quantum MDS code with parameters
$[[\frac{r}{16}(q^2-1)+1,\frac{r}{16}(q^2-1)-2k+1,k+1]]_q$
for any $1\leq k\leq \frac{9(q-1)}{16}+\frac{1}{2}$ and $r\in \{5,7\}$.
\item[(2)]
When $\frac{q-1}{12}=2\tau+1$ in Theorem \ref{th 11} (2),
let $r=23$, then there exists a quantum MDS code with parameters $[[\frac{23}{24}(q^2-1)+1,\frac{23}{24}(q^2-1)-2k+1,k+1]]_q$
for any $1\leq k\leq \frac{23(q-1)}{24}+\frac{1}{2}$.
\end{itemize}
\end{example}

Applying the propagation rule for Theorem \ref{th 11},
we can get the following corollary.

\begin{corollary}\label{cor 1}(\cite{RefJ (2019) F.Tian})
Let $q$ be an odd integer and $n=(\frac{h}{2}+1)\frac{q^2-1}{2h}$, where $\frac{q-1}{h}=2\tau+1$ for some $\tau\geq 1$. Then for $1\leq k\leq (h+1)\frac{q-1}{2h}-\frac{1}{2}$,
there exists a quantum MDS code with parameters $[[n,n-2k,k+1]]_q$.
\end{corollary}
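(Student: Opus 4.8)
The plan is to obtain Corollary~\ref{cor 1} directly from part~(1) of Theorem~\ref{th 11} by applying the Propagation Rule (Lemma~\ref{lem propagation rule}) exactly once. The key observation is that the length $n=(\frac{h}{2}+1)\frac{q^2-1}{2h}$ in the corollary is exactly one less than the length $\frac{r(q^2-1)}{2h}+1$ produced by Theorem~\ref{th 11}(1) when we take $r=\frac{h}{2}+1$.

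First I would verify that $r=\frac{h}{2}+1$ is admissible in Theorem~\ref{th 11}(1): since $h$ is even with $h\geq 2$, we have $\frac{h}{2}+1\leq h$, so the requirement $\frac{h}{2}+1\leq r\leq h$ is satisfied. Next I would apply Theorem~\ref{th 11}(1) with its dimension parameter set to $k+1$ (in place of $k$); this is legitimate precisely when $1\leq k+1\leq (h+1)\frac{q-1}{2h}+\frac{1}{2}$, i.e. when $0\leq k\leq (h+1)\frac{q-1}{2h}-\frac{1}{2}$. The theorem then yields a quantum MDS code with parameters $[[\,n+1,\ (n+1)-2(k+1),\ k+2\,]]_q$, where $n+1=(\frac{h}{2}+1)\frac{q^2-1}{2h}+1$.

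Then I would feed this code into Lemma~\ref{lem propagation rule}. Shortening the length by one carries $[[\,n+1,\ (n+1)-2(k+1),\ k+2\,]]_q$ to $[[\,n,\ (n+1)-2(k+1)+1,\ k+1\,]]_q=[[\,n,\ n-2k,\ k+1\,]]_q$, which is exactly the code claimed in the statement. Finally, intersecting the admissible range $0\leq k\leq (h+1)\frac{q-1}{2h}-\frac{1}{2}$ with the condition $k\geq 1$ (so that the output quantum code is nontrivial) produces the stated range $1\leq k\leq (h+1)\frac{q-1}{2h}-\frac{1}{2}$.

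Since every ingredient is already in place --- Theorem~\ref{th 11} and Lemma~\ref{lem propagation rule} --- there is no genuine obstacle in this argument. The only points requiring attention are checking that $r=\frac{h}{2}+1$ lands inside the interval $[\frac{h}{2}+1,\,h]$ and keeping the parameter bookkeeping straight through the reindexing $k\mapsto k+1$ and the single propagation step.
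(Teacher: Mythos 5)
Your proposal is correct and is exactly the paper's own route: the paper derives this corollary by taking $r=\frac{h}{2}+1$ in Theorem~\ref{th 11}(1) and applying the Propagation Rule (Lemma~\ref{lem propagation rule}) once, with the same parameter bookkeeping $k\mapsto k+1$ that you carry out. Nothing further is needed.
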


\begin{corollary}\label{cor 2}
Let $q$ be an odd integer and $n=\frac{r(q^2-1)}{2h}$, where $\frac{q-1}{h}=2\tau+1$ for some $\tau\geq 1$. Then for odd $h< r< h+1$ and $1\leq k\leq r\frac{q-1}{2h}-\frac{1}{2}$,
there exists a quantum MDS code with parameters $[[n,n-2k,k+1]]_q$.
\end{corollary}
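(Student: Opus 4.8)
The plan is to obtain Corollary \ref{cor 2} directly from Theorem \ref{th 11}(2) by one application of the Propagation Rule, Lemma \ref{lem propagation rule}. First I would fix an odd integer $r$ satisfying the hypothesis of Theorem \ref{th 11}(2) and set $N=\frac{r(q^2-1)}{2h}+1$, so that $n=N-1=\frac{r(q^2-1)}{2h}$. By Theorem \ref{th 11}(2), for every integer $K$ with $1\le K\le r\frac{q-1}{2h}+\frac12$ there exists a quantum MDS code with parameters $[[N,\,N-2K,\,K+1]]_q$.

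Next I would feed each of these codes into Lemma \ref{lem propagation rule}: from $[[N,\,N-2K,\,K+1]]_q$ we get a quantum MDS code with parameters $[[N-1,\,N-2K+1,\,K]]_q$. Substituting $N=n+1$ and putting $K=k+1$, the parameters become $[[n,\,n-2k,\,k+1]]_q$, because $N-2K+1=(n+1)-2(k+1)+1=n-2k$ while the new minimum distance is $K=k+1$. As $K$ ranges over $1\le K\le r\frac{q-1}{2h}+\frac12$, the shifted index $k=K-1$ ranges over $0\le k\le r\frac{q-1}{2h}-\frac12$; discarding the degenerate value $k=0$ leaves exactly the asserted family $[[n,\,n-2k,\,k+1]]_q$ for $1\le k\le r\frac{q-1}{2h}-\frac12$. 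A one-line check that $2(k+1)=n-(n-2k)+2$ confirms these codes saturate the quantum Singleton bound, hence are MDS.

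There is no real obstacle: this is a routine corollary of Theorem \ref{th 11}. The only points needing a word of care are the index bookkeeping $k\mapsto k+1$ so the dimension and distance come out in the stated normalization, and verifying that the upper end of the range of $k$ is precisely $r\frac{q-1}{2h}-\frac12$, i.e. one unit below the range allowed in Theorem \ref{th 11}(2), which the substitution makes transparent. (The same argument applied instead to Theorem \ref{th 11}(1) reproduces Corollary \ref{cor 1}, so Corollaries \ref{cor 1} and \ref{cor 2} are simply the two ``length $n=r(q^2-1)/2h$'' specializations coming from the two parts of Theorem \ref{th 11}.)
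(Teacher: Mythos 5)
Your proposal is correct and is exactly the paper's (implicit) argument: Corollary \ref{cor 2} is obtained from Theorem \ref{th 11}(2) by a single application of the propagation rule, Lemma \ref{lem propagation rule}, and your index bookkeeping ($K=k+1$, $N=n+1$, discarding $k=0$) is right. Note only that the stated range ``odd $h<r<h+1$'' in the corollary is vacuous for integers and is evidently a typo for ``odd $h<r<2h$'' (as recorded in Table \ref{tab:3}), which is the range your argument actually uses via the hypothesis of Theorem \ref{th 11}(2).
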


\subsection{quantum MDS codes of length $n=r\frac{q^2-1}{2h}$, where $\frac{q-1}{h}$ is odd}

Throughout this section,
let $h$ be an even integer such that $\frac{q-1}{h}=2\tau+1$ for some $\tau \geq 1$.

\begin{lemma}\label{lem sec4 3}
Let $q$, $h$, $\tau$ and $m$ be defined as above.
Suppose $1\leq k\leq (h+2t)\frac{q-1}{2h}$, where $1\leq t\leq \frac{h}{2}$.
Then for $0\leq i,j\leq k-1$, $qi+j+\frac{q+1}{2}=sm$ if and only if
$s\in [2,4,\dots,2t-2]\cup[h+2,h+4,\dots,h+2t-2]\cup \{1,3,\dots,h+2t-1\}$.
\end{lemma}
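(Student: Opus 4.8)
The plan is to follow the scheme of the proof of Lemma~\ref{lem sec 4 1}, namely a case analysis on the parity of $s$. Since $0\le i,j\le k-1$ and $k\le (h+2t)\frac{q-1}{2h}\le q-1$, we have $i,j\le q-2$, so $0\le qi+j\le q^2-q-2$ and hence $0<qi+j+\frac{q+1}{2}<q^2-1=2hm$; thus if $qi+j+\frac{q+1}{2}=sm$ then $1\le s\le 2h-1$. Put $d=\frac{q-1}{h}=2\tau+1$, so that $m=d\cdot\frac{q+1}{2}$ and the equation reads $qi+j=sm-\frac{q+1}{2}$. As $0\le j\le q-2<q$, the integers $i$ and $j$ are forced: they are the quotient and the remainder of $sm-\frac{q+1}{2}$ on division by $q$. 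The heart of the argument is to write $sm-\frac{q+1}{2}$ in the canonical form $iq+j$ with $0\le j<q$ and then read off exactly when the two conditions $i\le k-1$ and $j\le k-1$ hold.

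For $s$ even, write $s=2\sigma$; then $sm=\sigma d(q+1)=\sigma d\,q+\sigma d$. Using $q-1=hd$ with $d\ge 3$, one checks the equivalences $\sigma d<\frac{q+1}{2}\iff s\le h$ and $\sigma d\ge\frac{q+1}{2}\iff s\ge h+2$, so these two subcases exhaust all even $s$ in the admissible range. In the first subcase $sm-\frac{q+1}{2}=(\sigma d-1)q+(\sigma d+\frac{q-1}{2})$ with $0\le \sigma d+\frac{q-1}{2}<q$, so $i=\sigma d-1$ and $j=\sigma d+\frac{q-1}{2}=\frac{(s+h)(q-1)}{2h}$; combining $j\le k-1$ with $k\le(h+2t)\frac{q-1}{2h}$ forces $s\le 2t-2$, giving $s\in[2,4,\dots,2t-2]$. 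In the second subcase $sm-\frac{q+1}{2}=\sigma d\,q+(\sigma d-\frac{q+1}{2})$, so $i=\sigma d=\frac{s(q-1)}{2h}$ and $j=\sigma d-\frac{q+1}{2}<i$; here the binding constraint is $i\le k-1$, which forces $s\le h+2t-2$, giving $s\in[h+2,h+4,\dots,h+2t-2]$.

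For $s$ odd, $sd$ is odd, so writing $sd=2p+1$ yields $sm=p(q+1)+\frac{q+1}{2}$ and hence $sm-\frac{q+1}{2}=p(q+1)=pq+p$ with $p=\frac{sd-1}{2}<q$; thus $i=j=p=\frac{s(q-1)-h}{2h}$. The condition $i\le k-1$ now fails as soon as $s\ge h+2t+1$, because for such $s$ one has $i-k\ge\frac{(s-h-2t)d-1}{2}\ge\frac{d-1}{2}=\tau\ge 1$, contradicting $i\le k-1$; moreover $h+2t-1\le 2h-1$ since $t\le\frac{h}{2}$. Hence the admissible odd values are exactly $s\in\{1,3,\dots,h+2t-1\}$, and together with the even case this is precisely the set in the statement. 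For the converse one substitutes each listed $s$ back into the formulas above and verifies, with $k$ taken at the top of the stated range, that the resulting pair satisfies $0\le i,j\le k-1$.

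The step I expect to be the main obstacle is the bookkeeping in the even case: the extra term $\frac{q+1}{2}$ introduces a ``borrow'' when passing from $\sigma d\,q+\sigma d$ to the canonical form $iq+j$ with $0\le j<q$, and the precise threshold of that borrow ($s\le h$ versus $s\ge h+2$, with no even $s$ equal to $h+1$) must be extracted from the identity $q-1=h(2\tau+1)$ together with $\tau\ge 1$. Pinning down the two equivalences above exactly, and keeping track of how the parity of $s$ controls the parity of $sd$, is where the argument needs the most care; the odd case, by contrast, is clean, since there the $\frac{q+1}{2}$ cancels and one recovers $i=j$.
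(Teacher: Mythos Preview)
Your proposal is correct and follows essentially the same approach as the paper: a parity split on $s$, with the even case further divided by the threshold $s\le h$ versus $s\ge h+2$, and the odd case handled via $i=j=\frac{sd-1}{2}$. Your computed values of $i$ and $j$ agree with the paper's (e.g.\ your $j=\sigma d+\frac{q-1}{2}$ is the paper's $\frac{h+s}{2}\frac{q-1}{h}$, and your $p=\frac{sd-1}{2}$ equals the paper's $\frac{s-1}{2}\frac{q-1}{h}+\tau$), and the exclusion arguments are identical. Your framing of the even case as a ``borrow'' in the division algorithm, and your explicit derivation of the range $1\le s\le 2h-1$ from $qi+j\le q^2-q-2$, are slightly more careful than the paper's exposition but do not constitute a different method.
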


\begin{proof}
Note that $0\leq i,j<q-1$,
we have $0\leq qi+j< q^2-1$.
If $qi+j+\frac{q+1}{2}=sm=s\frac{q^2-1}{2h}$,
then $1\leq s< 2h$.
Similarly, we divide the proof into the following two parts.

$\bullet$ $\textbf{Case 1:}$ $s$ is even.

For $2\leq s\leq h$,
we have
\[\begin{split} qi+j&=s\frac{q^2-1}{2h}-\frac{q+1}{2}\\
                           &=(\frac{s}{2}\frac{q-1}{h}-1)q+\frac{h+s}{2}\frac{q-1}{h}.\\
	\end{split}\]
It follows that
$$i=\frac{s}{2}\frac{q-1}{h}-1,\quad j=\frac{h+s}{2}\frac{q-1}{h}.$$

If $2t\leq s\leq h$, then
$$j=\frac{h+s}{2}\frac{q-1}{h}\geq (h+2t)\frac{q-1}{2h}\geq k,$$
which contradicts to the fact $j\leq k-1$;

On other hand, for $h+2\leq s\leq 2h$,
we have
\[\begin{split} qi+j&=s\frac{q^2-1}{2h}-\frac{q+1}{2}\\
                           &=(\frac{s}{2}\frac{q-1}{h})q+\frac{s-h}{2}\frac{q-1}{h}-1.\\
	\end{split}\]
It follows that
$$i=\frac{s}{2}\frac{q-1}{h},\quad j=\frac{s-h}{2}\frac{q-1}{h}-1.$$

If $h+2t\leq s\leq 2h$, then
$$i=\frac{s}{2}\frac{q-1}{h}\geq (h+2t)\frac{q-1}{2h}\geq k,$$
which contradicts to the fact $i\leq k-1$.
Hence, $s\in [2,4,\dots,2t-2]\cup[h+2,h+4,\dots,h+2t-2]$.

$\bullet$ $\textbf{Case 2:}$ $s$ is odd.

For $1\leq s\leq 2h-1$,
we have
\[\begin{split} qi+j&=s\frac{q^2-1}{2h}-\frac{q+1}{2}\\
                           &=(\frac{s-1}{2}\frac{q-1}{h}+\tau)q+\frac{s-1}{2}\frac{q-1}{h}+\tau.\\
	\end{split}\]
It follows that
$$i=j=\frac{s-1}{2}\frac{q-1}{h}+\tau.$$

If $h+2t+1\leq s\leq 2h-1$, then
$$i=\frac{s-1}{2}\frac{q-1}{h}+\tau\geq (h+2t)\frac{q-1}{2h}\geq k,$$
which contradicts to the fact $i\leq k-1$.
Hence, $s\in \{1,3,\dots,h+2t-1\}$.
This completes the proof.
\end{proof}

\begin{theorem}\label{th 22}
Let $q$ be an odd integer and $n=\frac{r(q^2-1)}{2h}$, where $\frac{q-1}{h}=2\tau+1$ for some $\tau\geq 1$. Then for $\frac{h}{2}+1< r\leq h$ and $1\leq k\leq (h+2)\frac{q-1}{2h}$,
there exists a quantum MDS code with parameters $[[n,n-2k,k+1]]_q$.
\end{theorem}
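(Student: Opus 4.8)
The plan is to mimic the structure of the proof of Theorem \ref{th 11}, now using Lemma \ref{lem sec4 3} in place of Lemma \ref{lem sec 4 1}. First I would set up the code locators and column multipliers in the same ``blocked cyclotomic coset'' form: take
$$\bm{a}=\bigl(\theta^{i_1}(\langle\gamma\rangle),\theta^{i_2}(\langle\gamma\rangle),\dots,\theta^{i_r}(\langle\gamma\rangle)\bigr)\in\F_{q^2}^{r\frac{q^2-1}{2h}},$$
with $i_1,\dots,i_r$ distinct modulo $2h$, and $\bm{v}=(v_1\bm{1}_m,\dots,v_r\bm{1}_m)$ with the $v_l\in\F_{q^2}^*$ to be chosen. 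The key difference from Theorem \ref{th 11} is that there is no ``$0$'' locator here (since $n$ is a multiple of $m$, not one more), so the condition from Lemma \ref{lem GRS panding} is $\langle\bm{a}^{qi+j},\bm{v}^{q+1}\rangle=0$ for all $0\leq i,j\leq k-1$, and the inner geometric sum $\sum_{\nu=0}^{m-1}\gamma^{\nu(qi+j)}$ vanishes unless $m\mid(qi+j)$. Since the parity argument in Lemma \ref{lem sec4 3} is phrased in terms of $qi+j+\tfrac{q+1}{2}=sm$, I would need to reconcile this: I expect the intended device is to absorb the shift $\tfrac{q+1}{2}$ into the column multipliers via $\xi=\theta^{-\frac{q+1}{2}}$ (with $\xi^q=-\xi$), i.e. replace each block's multiplier by $v_l\xi^{(\text{something})}$ so that the orthogonality condition becomes $\langle\bm{a}^{qi+j+\frac{q+1}{2}},\bm{v'}^{q+1}\rangle=0$, reducing to the $m\mid(qi+j+\frac{q+1}{2})$ case handled by Lemma \ref{lem sec4 3} with $t=1$ (since $k\leq(h+2)\frac{q-1}{2h}$).

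With that reduction in hand, Lemma \ref{lem sec4 3} (with $t=1$) tells me the only relevant values are $s\in\{1,3,\dots,h+1\}$, all odd. So for each such $s$ the surviving equation is $\sum_{l=1}^r\alpha^{i_l s}v_l^{q+1}=0$ (up to the factor $m$ and whatever $\xi$-power appears), where $\alpha=\theta^m$. I would then assemble the coefficient matrix
$$\tilde A=\begin{pmatrix}\alpha^{i_1}&\alpha^{i_2}&\cdots&\alpha^{i_r}\\ \alpha^{3i_1}&\alpha^{3i_2}&\cdots&\alpha^{3i_r}\\ \vdots&\vdots&\ddots&\vdots\\ \alpha^{(h+1)i_1}&\alpha^{(h+1)i_2}&\cdots&\alpha^{(h+1)i_r}\end{pmatrix}$$
of size $\frac{h+2}{2}\times r$. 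Using $r\leq h$ and the hypothesis $r>\frac{h}{2}+1$, I would show the relevant rank conditions: $\mathrm{rank}(\tilde A)=\mathrm{rank}(\tilde A_i)=\frac{h+2}{2}$ for all $i$, by a Vandermonde-type argument after factoring $\alpha^{i_l}$ from each column (the remaining matrix has entries $(\alpha^{2i_l})^{0},\dots,(\alpha^{2i_l})^{h/2}$, and since the $i_l$ are distinct mod $2h$ the values $\alpha^{2i_l}=\beta^{i_l}$ are distinct). Then Lemma \ref{lem two youjie 1} gives a solution $\bm u\in(\F_{q^2}^*)^r$, and the symmetry $\alpha^{si_lq}=(-\alpha^s)^{i_l}=\alpha^{(h+s)i_l}$ for odd $s$ — combined with the fact that $\{1,3,\dots,h+1\}$ is sent to itself by $s\mapsto h+s$ taken modulo $2h$ (so $\tilde A^{(q)}$ is a row permutation of $\tilde A$) — lets me invoke Lemma \ref{lem two youjie 2} to descend to a solution $\bm u\in(\F_q^*)^r$. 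Finally I set $v_l^{q+1}=u_l$ (so the $v_l$ exist in $\F_{q^2}^*$ by surjectivity of the norm map), conclude Hermitian self-orthogonality via Lemma \ref{lem GRS panding}, and apply Lemma \ref{Lem zhuyao} to get the $[[n,n-2k,k+1]]_q$ quantum MDS code; MDS-ness is automatic since GRS codes and their duals are MDS.

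The main obstacle I anticipate is the bookkeeping around the $\frac{q+1}{2}$-shift: I must choose the per-block correction factors (powers of $\xi$ or $\theta$) so that simultaneously (i) the diagonal/constant structure needed to apply Lemma \ref{lem GRS panding} is preserved, (ii) the surviving equations for distinct $s$ really do all collapse to the single linear system above with a common right-hand side of zero, and (iii) the multipliers stay in $\F_{q^2}^*$. A secondary subtlety is verifying the rank conditions at the boundary cases $r=\frac{h}{2}+2$ and $r=h$, where the matrix $\tilde A$ is close to square — here I need $r\geq\frac{h+2}{2}$ for the full-rank claim, which is exactly guaranteed by $r>\frac{h}{2}+1$, and I should double-check that deleting any one column still leaves rank $\frac{h+2}{2}$, which needs $r-1\geq\frac{h+2}{2}$, i.e. $r\geq\frac{h}{2}+2$; this is why the hypothesis is the strict inequality $\frac{h}{2}+1<r$. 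Everything else is a routine adaptation of the Theorem \ref{th 11} argument.
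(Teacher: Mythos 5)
Your high-level plan matches the paper's (same block locators, reduce to a linear system indexed by the surviving residues $s$, a Vandermonde rank argument, then Lemmas \ref{lem two youjie 1}, \ref{lem two youjie 2}, \ref{lem GRS panding} and \ref{Lem zhuyao}), but the step you flag as ``the main obstacle'' is a genuine gap, and your proposed fix does not work. A per-block constant correction $v_l\xi^{(\text{something})}$ only rescales each block's contribution; it cannot alter the inner geometric sum $\sum_{\nu=0}^{m-1}\gamma^{\nu(qi+j)}$, so with block-constant multipliers the vanishing condition stays $m\mid(qi+j)$ and Lemma \ref{lem sec4 3} never becomes applicable. The paper's device is to make the multipliers vary \emph{within} each block as a geometric progression,
$$\bm{v}=(v_1,v_1\theta^{h},\dots,v_1\theta^{(m-1)h},\dots,v_r,v_r\theta^{h},\dots,v_r\theta^{(m-1)h}),$$
so that $(\theta^{\nu h})^{q+1}=\gamma^{\nu\frac{q+1}{2}}$ inserts the shift directly into the sum, turning it into $\sum_{\nu}\gamma^{\nu(qi+j+\frac{q+1}{2})}$; evaluating $\theta^{i_l(qi+j)}$ at $qi+j=sm-\frac{q+1}{2}$ then produces the factor $\alpha^{i_ls}\xi^{i_l}$ in each surviving equation.

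A second, independent error: you drop the $\xi^{i_l}$ column factors and then claim $\tilde A^{(q)}$ is a row permutation of $\tilde A$ because $s\mapsto h+s\pmod{2h}$ preserves $\{1,3,\dots,h+1\}$. It does not: for $h=4$ the set is $\{1,3,5\}$ and $3\mapsto 7\notin\{1,3,5\}$. In fact $\tilde A^{(q)}=\tilde A\cdot\mathrm{diag}((-1)^{i_1},\dots,(-1)^{i_r})$, which is a column operation, not a row operation, so Lemma \ref{lem two youjie 2} does not apply as you invoke it; and even if you obtained $u_l\in\F_q^*$ for the $\xi$-free system, setting $v_l^{q+1}=u_l\xi^{-i_l}$ fails because $\xi\notin\F_q$ is not a norm for odd $i_l$. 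The paper sidesteps all of this by keeping the $\xi^{i_l}$ factors: since $s$ is always odd, each entry $\alpha^{si_l}\xi^{i_l}=(\alpha^s\xi)^{i_l}\cdot(\beta\text{-powers})$ satisfies $(\alpha^{si_l}\xi^{i_l})^q=((-\alpha)^s(-\xi))^{i_l}=(\alpha^s\xi)^{i_l}$, so the coefficient matrix is already over $\F_q$ and Lemma \ref{lem two youjie 1} applies directly with no Frobenius descent. Your rank bookkeeping (needing $r\geq\frac{h}{2}+2$, i.e.\ the strict inequality $\frac{h}{2}+1<r$) is correct, but the two issues above must be repaired for the proof to go through.
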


\begin{proof}
Set
$$\bm{a}=(\theta^{i_1}(\langle \gamma\rangle),\theta^{i_2}(\langle \gamma\rangle),\dots,\theta^{i_r}(\langle \gamma\rangle))\in \F_{q^2}^{r\frac{q^2-1}{2h}},$$
where $i_1,i_2,\dots,i_r$ are distinct modulo $2h$.
Let
$$\bm{v}=(v_1,v_1\theta^{h},\dots,v_1\theta^{(m-1)h},\dots,v_{r},v_r\theta^{h}\dots,v_{r}\theta^{(m-1)h})\in (\F_{q^2}^*)^{r\frac{q^2-1}{2h}},$$
where $v_1,v_2\dots,v_r\in \F_{q^2}^*$.
Then for any $0\leq i,j< q-1$, we have
$$\langle \bm{a}^{qi+j},\bm{v}^{q+1}\rangle=\sum_{l=1}^r\theta^{i_l(qi+j)}v_l^{q+1}\sum_{\nu=0}^{m-1}\gamma^{\nu(qi+j+\frac{q+1}{2})},$$
thus
\[\langle \bm{a}^{qi+j},\bm{v}^{q+1}\rangle=\begin{cases}
0, & if\ m\nmid (qi+j+\frac{q+1}{2}); \\
m\sum_{l=1}^r\theta^{i_l(qi+j)}v_l^{q+1}, & if\  m\mid (qi+j+\frac{q+1}{2}).
\end{cases}\]

Since $1\leq k\leq (h+2)\frac{q-1}{2h}$, where $t=1$,
by Lemma \ref{lem sec4 3},
when $m\mid (qi+j+\frac{q+1}{2})$, we have
$\langle \bm{a}^{qi+j},\bm{v}^{q+1}\rangle=m\sum_{l=1}^r\alpha^{i_ls}\xi^{i_l}v_l^{q+1},$
where $s\in \{1,3,\dots,h+1\}$.
Since $r\leq h$, we can assume that $i_1,i_2,\dots,i_r$ are distinct modulo $h$.
Let
$$A=\begin{pmatrix}
  \alpha^{i_1} \xi^{i_1} &  \alpha^{i_2} \xi^{i_2} & \dots & \alpha^{i_r}\xi^{i_r}\\
  \alpha^{i_1}\beta^{i_1} \xi^{i_1} &  \alpha^{i_2}\beta^{i_2} \xi^{i_2} & \dots & \alpha^{i_r}\beta^{i_r} \xi^{i_r}\\
  \vdots & \vdots &  \ddots & \vdots\\
  \alpha^{i_1}\beta^{\frac{h}{2}i_1} \xi^{i_1} &  \alpha^{i_2}\beta^{\frac{h}{2}i_2} \xi^{i_2} & \dots & \alpha^{i_r}\beta^{\frac{h}{2}i_r} \xi^{i_r}\\
\end{pmatrix}$$
be an $(\frac{h}{2}+1)\times r$ matrix over $\F_{q^2}$.
Note that $\beta^q=\beta$, $\alpha^q=-\alpha$ and $\xi^q=-\xi$,
it follows that $\beta\in \F_q$ and $\alpha\xi\in \F_q$.
Therefore, $A$ is a matrix over $\F_q$.
Similarly, we can obtain that the equation $A\bm{u}^T=\bm{0}^{T}$ has a solution $\bm{u}=(u_1,u_2,\dots,u_r)\in (\F_{q}^*)^{r}$.
Let $v_l^{q+1}=u_l$ for $1\leq l\leq r$.
When $m\mid (qi+j+\frac{q+1}{2})$,
we have
$$\langle \bm{a}^{qi+j},\bm{v}^{q+1}\rangle=m\sum_{l=1}^r\theta^{i_lsm}\xi^{i_l}v_l^{q+1}
=m\sum_{l=1}^r\alpha^{i_ls}\xi^{i_l}u_l=0.$$
Then $\langle \bm{a}^{qi+j},\bm{v}^{q+1}\rangle=0$, for all $0\leq i,j\leq k-1$.
By Lemmas \ref{lem GRS panding} and \ref{Lem zhuyao}, we have the conclusion.
This completes the proof.
\end{proof}

Taking $r=\frac{h}{2}+2$ in Theorem \ref{th 22}, we obtain the following known results.

\begin{corollary}\label{cor 3}(\cite{RefJ (2019) F.Tian})
Let $q$ be an odd integer and $n=(\frac{h}{2}+2)\frac{q^2-1}{2h}$, where $\frac{q-1}{h}=2\tau+1$ for some $\tau\geq 1$. Then for any $1\leq k\leq (h+2)\frac{q-1}{2h}$,
there exists a quantum MDS code with parameters $[[n,n-2k,k+1]]_q$.
\end{corollary}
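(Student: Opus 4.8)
The plan is to derive this as the special case $r=\frac{h}{2}+2$ of Theorem \ref{th 22}. Since $h$ is even, $r=\frac{h}{2}+2$ is a positive integer, and substituting it into the length formula $n=\frac{r(q^2-1)}{2h}$ of Theorem \ref{th 22} yields exactly $n=\left(\frac{h}{2}+2\right)\frac{q^2-1}{2h}$. Moreover the dimension range $1\le k\le (h+2)\frac{q-1}{2h}$ of Theorem \ref{th 22} is literally the range asserted in the corollary, so no reparametrisation is needed.

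The one point that genuinely has to be checked is that $r=\frac{h}{2}+2$ lies in the admissible window $\frac{h}{2}+1<r\le h$ of Theorem \ref{th 22}. The strict lower bound $\frac{h}{2}+1<\frac{h}{2}+2$ is trivial; the upper bound $\frac{h}{2}+2\le h$ is equivalent to $h\ge 4$, which (since $h$ is even) is the intended regime. For the degenerate value $h=2$ one would instead have $r=3>h$, and then $n=\frac{3(q^2-1)}{4}<q^2$ falls into the already-settled range $n\le q^2+1$, so that case is not at stake. Note also that Theorem \ref{th 22} imposes no parity condition on $r$, so there is no obstruction from $\frac{h}{2}+2$ being even or odd, in contrast with part (2) of Theorem \ref{th 11}.

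Once admissibility is established, Theorem \ref{th 22} immediately produces the Hermitian self-orthogonal $[n,k,n-k+1]_{q^2}$ GRS code, and hence, via the Hermitian construction of Lemma \ref{Lem zhuyao}, the quantum MDS code $[[n,n-2k,k+1]]_q$ for every $k$ in the stated range. Unlike Corollaries \ref{cor 1} and \ref{cor 2}, no appeal to the propagation rule (Lemma \ref{lem propagation rule}) is required, because Theorem \ref{th 22} already outputs codes of length exactly $\frac{r(q^2-1)}{2h}$. Consequently there is essentially no obstacle: the proof amounts to the substitution $r=\frac{h}{2}+2$ together with the routine inequality $\frac{h}{2}+2\le h$. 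If a self-contained argument is preferred, one may simply re-run the proof of Theorem \ref{th 22} verbatim with this choice of $r$.
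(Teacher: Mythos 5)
Your proposal is correct and matches the paper's own argument exactly: the paper derives Corollary~\ref{cor 3} precisely by taking $r=\frac{h}{2}+2$ in Theorem~\ref{th 22}. Your additional check that $\frac{h}{2}+1<\frac{h}{2}+2\le h$ (i.e., $h\ge 4$) is a point the paper leaves implicit, but it does not change the approach.
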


\begin{example}
By Theorem \ref{th 22} and Corollary \ref{cor 2}, we obtain the following examples of quantum MDS codes.
\begin{itemize}
\item[(1)]
When $\frac{q-1}{6}=2\tau+1$ in Theorem \ref{th 22},
let $r=5$,
then there exists a quantum MDS code with parameters
 $[[\frac{5}{12}(q^2-1),\frac{5}{12}(q^2-1)-2k,k+1]]_q$
for any $1\leq k\leq \frac{2(q-1)}{3}$.
\item[(2)]
When $\frac{q-1}{10}=2\tau+1$ in Corollary \ref{cor 2},
let $r=19$, then there exists a quantum MDS code with parameters $[[\frac{19}{20}(q^2-1),\frac{19}{20}(q^2-1)-2k,k+1]]_q$
for any $1\leq k\leq \frac{19(q-1)}{20}-\frac{1}{2}$.
\end{itemize}
\end{example}

\subsection{quantum MDS codes of length $n=r\frac{q^2-1}{2h}+1$, where $\frac{q+1}{h}$ is odd}

Throughout this section,
let $h$ be an even integer such that $\frac{q+1}{h}=2\tau+1$ for some $\tau \geq 1$.

\begin{lemma}\label{lem sec 4 4}
Let $q$, $h$, $\tau$ and $m$ be defined as above. The following statements hold.
\begin{itemize}
\item[(1)] Suppose $1\leq k\leq (h+2t+1)\frac{q+1}{2h}-\frac{3}{2}$, where $0\leq t\leq \frac{h}{2}-1$. Then for $0\leq i,j\leq k-1$, $qi+j=sm$ if and only if
$s\in \{0\}\cup\{1,3,\dots,h-1\}\cup \{h-2t, h-2t+2,\dots, h+2t\}\cup [2h-2t+1,2h-2t+3,\dots,h+2t-1]$.
\item[(2)] Suppose $1\leq k\leq (h+2t)\frac{q+1}{2h}-1$, where $1\leq t\leq \frac{h}{2}$. Then for $0\leq i,j\leq k-1$, $qi+j=sm$ if and only if
$s\in \{0\}\cup\{1,3,\dots,h-1\}\cup \{h-2t+2, h-2t+4,\dots, h+2t-2\}\cup [2h-2t+1,2h-2t+3,\dots,h+2t-1]$.
\end{itemize}
\end{lemma}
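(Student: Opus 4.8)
The plan is to mirror the strategy already used in Lemma \ref{lem sec 4 1}: write $qi+j = sm = s\frac{q^2-1}{2h}$ with $0 \le s < 2h$ (the bound on $s$ coming from $0 \le qi+j < q^2-1$), and then for each residue class of $s$ mod $2$ solve explicitly for $i$ and $j$ in terms of $s$, using the hypothesis $\frac{q+1}{h} = 2\tau+1$. Since here it is $q+1$ rather than $q-1$ that is divisible by $h$, I expect the convenient identity to be $q \equiv -1 \pmod{h}$-type bookkeeping; concretely I would substitute $q = h(2\tau+1)/\cdot$ only symbolically and instead repeatedly use $\frac{q+1}{h} = 2\tau+1$ to split $s\frac{q^2-1}{2h} = s\frac{(q-1)(q+1)}{2h} = s(q-1)\frac{2\tau+1}{2}$, handling the parity of $s$ to keep everything integral. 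For each $s$ I get a unique candidate pair $(i(s), j(s))$; the lemma then amounts to checking exactly for which $s$ both $i(s) \le k-1$ and $j(s) \le k-1$ hold, given the stated upper bound on $k$.

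For part (1), with $k \le (h+2t+1)\frac{q+1}{2h} - \frac{3}{2}$, I would carry out the case split: (a) $s$ even, (b) $s$ odd, and within each, the sub-ranges $s$ small versus $s$ large (i.e.\ $s \le h$ vs.\ $s > h$), exactly as in Lemma \ref{lem sec 4 1}. In each sub-case I solve $qi+j = s\frac{q^2-1}{2h}$ by writing the right side as $Aq + B$ with $0 \le B < q$, read off $i = A$, $j = B$, then impose $i,j \le k-1$. The key inequalities will pin down the admissible $s$: the even-$s$ values will contribute the block $\{h-2t, h-2t+2, \dots, h+2t\}$ (a symmetric band around $h$), the odd-$s$ values will contribute $\{1,3,\dots,h-1\}$ together with the ``wrap-around'' block $[2h-2t+1, 2h-2t+3, \dots, h+2t-1]$ (nonempty only when $t$ is large enough, which is why the bracket notation $[\ ]$ was introduced), and of course $s=0$ always works since $qi+j=0$ forces $i=j=0$. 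Part (2), with the slightly smaller bound $k \le (h+2t)\frac{q+1}{2h} - 1$, is handled identically; the tighter constraint on $k$ simply shaves one value off each end of the even-$s$ band, turning $\{h-2t,\dots,h+2t\}$ into $\{h-2t+2,\dots,h+2t-2\}$, while the odd-$s$ blocks are unchanged.

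The only genuine obstacle is arithmetic discipline: because $h \mid q+1$ rather than $h \mid q-1$, the ``carry'' when writing $s\frac{q^2-1}{2h}$ in the form $iq+j$ behaves differently for even and odd $s$, and one must be careful that $j$ stays in $[0, q-1)$ (in particular that the $\pm\frac{1}{2}$, $\pm 1$, $\pm\frac{3}{2}$ corrections land on the correct side). I would organize this by first recording the two master identities
\[
s\,\frac{q^2-1}{2h} \;=\; \Bigl(\tfrac{s}{2}\cdot\tfrac{q+1}{h}\cdot(q-1)\Bigr)\quad(s\text{ even}),
\qquad
s\,\frac{q^2-1}{2h} \;=\; \tfrac{s-1}{2}\cdot\tfrac{q+1}{h}\cdot(q-1) + \tfrac{q^2-1}{2h}\quad(s\text{ odd}),
\]
reducing the odd case to the even case plus the fixed remainder $\frac{q^2-1}{2h} = \frac{(q-1)(2\tau+1)}{2}$, whose division by $q$ I compute once. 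After that, each case is a one-line inequality check against the bound on $k$, and assembling the resulting $s$-sets gives precisely the two lists in the statement. This completes the proof.
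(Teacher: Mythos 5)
Your proposal follows essentially the same route as the paper's proof: bound $s$ by $0\le s<2h$, split on the parity of $s$, solve $qi+j=s\frac{q^2-1}{2h}$ for the unique pair $(i,j)$ with $0\le j<q$, and test $i,j\le k-1$ against the bound on $k$; your prediction that the even $s$ give the symmetric band about $h$ while the odd $s$ give $\{1,3,\dots,h-1\}$ plus the wrap-around block, and that part (2) only trims the even band, is exactly what the paper's computation yields. The deferred arithmetic is routine and checks out, so the proposal is correct.
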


\begin{proof}
Since the proof of (2) is completely similar, we only prove (1).
Similarly to Lemma \ref{lem sec 4 1}, we can obtain $0\leq s< 2h$ and divide the proof into the following two parts.

$\bullet$ $\textbf{Case 1:}$ $s$ is odd.

For $h+1\leq s\leq 2h-1$,
we have
\[\begin{split} qi+j &=(\frac{s-1}{2}\frac{q+1}{h})q+\frac{q-1}{2}\frac{q+1}{h}-\frac{s-1}{2}\frac{q+1}{h}\\
                     &=(\frac{s-1}{2}\frac{q+1}{h}+\tau-1)q+q+\frac{h-s+1}{2}\frac{q+1}{h}-\tau-1\\
                     &=(\frac{s}{2}\frac{q+1}{h}-\frac{3}{2})q+\frac{3h-s}{2}\frac{q+1}{h}-\frac{3}{2}.\\
	\end{split}\]
It follows that
$$i=\frac{s}{2}\frac{q+1}{h}-\frac{3}{2},\quad j=\frac{3h-s}{2}\frac{q+1}{h}-\frac{3}{2}.$$

If $s\geq h+2t+1$, then
$$i=\frac{s}{2}\frac{q+1}{h}-\frac{3}{2}\geq (h+2t+1)\frac{q+1}{2h}-\frac{3}{2}\geq k,$$
which contradicts to the fact $i\leq k-1$;

If $s\leq 2h-2t-1$, then
$$j=\frac{3h-s}{2}\frac{q+1}{h}-\frac{3}{2}\geq  (h+2t+1)\frac{q+1}{2h}-\frac{3}{2}\geq k,$$
which contradicts to the fact $j\leq k-1$.
Hence, we have $s\in \{1,3,\dots,h-1\}\cup [2h-2t+1,2h-2t+3,\dots,h+2t-1]$.

$\bullet$ $\textbf{Case 2:}$ $s$ is even.

For $2\leq s\leq 2h-2$, we have
\[\begin{split} qi+j &=\frac{s}{2}\frac{q+1}{h}q-\frac{s}{2}\frac{q+1}{h}\\
                     &=(\frac{s}{2}\frac{q+1}{h}-1)q+\frac{2h-s}{2}\frac{q+1}{h}-1.\\
	\end{split}\]
It follows that
$$i=\frac{s}{2}\frac{q+1}{h}-1,\quad j=\frac{2h-s}{2}\frac{q+1}{h}-1.$$

If $s\leq h-2t-2$, then
$$ j=\frac{2h-s}{2}\frac{q+1}{h}-1 \geq (h+2t+2)\frac{q+1}{2h}-1\geq k,$$
which contradicts to the fact $j\leq k-1$;

If $s\geq h+2t+2$, then
$$i=\frac{s}{2}\frac{q+1}{h}-1\geq (h+2t+2)\frac{q+1}{2h}-1\geq k,$$
which contradicts to the fact $i\leq k-1$.
Hence, we have $s\in \{h-2t, h-2t+4,\dots, h+2t\}$.
This completes the proof.
\end{proof}

\begin{theorem}\label{th 33}
Let $q$ be an odd integer and $n=\frac{r(q^2-1)}{2h}+1$, where $\frac{q+1}{h}=2\tau+1$ for some $\tau\geq 1$.
The following statements hold.
\begin{itemize}
\item[(1)] For odd $h< r<\frac{3h}{2}$ and $1\leq k\leq (r+1)\frac{q+1}{2h}-1$,
  there exists a quantum MDS code with parameters $[[n,n-2k,k+1]]_q$.
\item[(2)] For odd $\frac{3h}{2}<r<2h$ and $1\leq k\leq r\frac{q+1}{2h}-\frac{3}{2}$,
  there exists a quantum MDS code with parameters $[[n,n-2k,k+1]]_q$.
\end{itemize}
\end{theorem}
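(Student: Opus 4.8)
The plan is to mirror the structure of the proofs of Theorems \ref{th 11} and \ref{th 22}. First I would choose the code locators as $\bm{a}=(0,\theta^{i_1}(\langle\gamma\rangle),\dots,\theta^{i_r}(\langle\gamma\rangle))\in\F_{q^2}^{r\frac{q^2-1}{2h}+1}$ with $i_1,\dots,i_r$ distinct modulo $2h$, and column multipliers $\bm{v}=(v_0,v_1\bm{1}_m,\dots,v_r\bm{1}_m)$ with $v_0,v_1,\dots,v_r\in\F_{q^2}^*$ to be determined. Exactly as in Theorem \ref{th 11}, the inner product $\langle\bm{a}^{qi+j},\bm{v}^{q+1}\rangle$ vanishes automatically unless $m\mid(qi+j)$, and when $(i,j)=(0,0)$ it equals $v_0^{q+1}+m\sum_{l=1}^r v_l^{q+1}$, while when $m\mid qi+j$ with $(i,j)\neq(0,0)$ it equals $m\sum_{l=1}^r\alpha^{i_l s}v_l^{q+1}$ where $qi+j=sm$. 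By Lemma \ref{lem sec 4 4} (using $t=\frac{r-h-1}{2}$ in part (1), and $t=\frac{r-h}{2}$ together with part (2) in part (2), so that the stated range of $k$ matches), the set of relevant exponents $s$ is exactly the set listed there.

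The next step is to set up the coefficient matrix. Writing $u_l=v_l^{q+1}$, I need the system: $\sum_l u_l = 0$ (the $s=0$ row, after absorbing $v_0$), together with $\sum_l \alpha^{i_l s}u_l=0$ for every relevant $s\neq 0$. Collect these into a matrix $A$ over $\F_{q^2}$ whose rows are indexed by $s\in\{0\}\cup(\text{odd }s)\cup(\text{even }s)$. As in Theorem \ref{th 11}(2), I would first verify that the "folded" matrix $\tilde A$ with rows $1,\alpha^{i_l},\alpha^{2i_l},\dots$ (a Vandermonde-type $r\times(r+1)$ array in the distinct entries $\alpha^{i_l}$, with a leading $(1,\bm 0^T)$ column for the $s=0$ equation) has all its $r\times r$ column-deleted submatrices of full rank $r$; this follows from the Vandermonde determinant since the $\alpha^{i_l}$ are distinct (here one uses $r<2h$ so the exponents stay in a single period). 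Lemma \ref{lem two youjie 1} then gives a solution $\bm u\in(\F_{q^2}^*)^{r+1}$ to $\tilde A\bm u^T=0$, hence to $A\bm u^T=0$ since the rows of $A$ are a sub-multiset of the rows of $\tilde A$.

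Then I would push the solution down to $\F_q$. Using $\alpha^h=-1$, $\alpha^q=-\alpha$, $\alpha^{2h}=1$: for an odd exponent $s$ in the relevant set, $(\alpha^{si_j})^q=(-\alpha^s)^{i_j}=\alpha^{(h+s)i_j}$, and the set in Lemma \ref{lem sec 4 4} is arranged so that $s$ odd appears iff $h+s\bmod 2h$ (equivalently the paired odd exponent) appears, so applying $x\mapsto x^q$ permutes those rows; for an even exponent $s$, $(\alpha^{si_j})^q=\alpha^{si_j}$, so those rows are fixed. Hence $A^{(q)}$ is row equivalent to $A$, and Lemma \ref{lem two youjie 2} upgrades $\bm u$ to a solution in $(\F_q^*)^{r+1}$. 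Finally I would set $v_l^{q+1}=u_l$ for $1\le l\le r$ and pick $v_0\in\F_{q^2}^*$ with $v_0^{q+1}=m\,u_0$ (possible since $x\mapsto x^{q+1}$ is onto $\F_q^*$), which makes the $(0,0)$ equation read $m\sum_{l=0}^r u_l=0$; all other relevant equations read $m\sum_{l=1}^r\alpha^{i_l s}u_l=0$. Thus $\langle\bm a^{qi+j},\bm v^{q+1}\rangle=0$ for all $0\le i,j\le k-1$, so by Lemma \ref{lem GRS panding} the code $\GRS_k(\bm a,\bm v)$ is Hermitian self-orthogonal, and Lemma \ref{Lem zhuyao} yields the $[[n,n-2k,k+1]]_q$ quantum MDS code.

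The main obstacle I anticipate is bookkeeping rather than conceptual: one must check that for the two sub-ranges of $r$ the index set from Lemma \ref{lem sec 4 4} has the correct self-paired structure under $s\mapsto h+s\pmod{2h}$ on the odd part (so that $A^{(q)}\sim A$ genuinely holds), and that the hypotheses $1\le t\le\frac h2$ and the odd-$r$ conditions $h<r<\frac{3h}{2}$ resp. $\frac{3h}{2}<r<2h$ translate to exactly the stated bounds on $k$; a further small point is confirming the column-deleted submatrices $\tilde A_i$ all have rank $r$ when one row is the special $s=0$ row (this is where $r<2h$, ensuring no exponent collision, is used, together with $r+1\le q+1$ so Lemmas \ref{lem two youjie 1}--\ref{lem two youjie 2} apply). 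I expect the verification that $k\le n$-type MDS constraints and the divisibility parity work out to be the most delicate piece.
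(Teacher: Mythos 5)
Your overall strategy is exactly the paper's (same choice of $\bm{a}$ and $\bm{v}$, reduction to a linear system via Lemma \ref{lem sec 4 4}, the auxiliary Vandermonde matrix $\tilde A$ plus Lemmas \ref{lem two youjie 1} and \ref{lem two youjie 2}), but two concrete steps are wrong as written. First, the Frobenius action: in this section $h\mid(q+1)$ with $\frac{q+1}{h}$ odd, so $\alpha^{q+1}=\theta^{\frac{q^2-1}{2}\cdot\frac{q+1}{h}}=-1$, i.e.\ $\alpha^q=-\alpha^{-1}$, \emph{not} $\alpha^q=-\alpha$ (that identity belongs to the $\frac{q-1}{h}$-odd setting of Theorems \ref{th 11} and \ref{th 22}). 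Consequently conjugation sends an odd exponent $s$ to $h-s$ and an even exponent $s$ to $-s\equiv 2h-s\pmod{2h}$; it does not send $s$ to $h+s$, and it does not fix the even rows. Your claimed pairing is in fact false for the relevant sets: $\{1,3,\dots,h-1\}$ is not closed under $s\mapsto h+s\pmod{2h}$. The row equivalence $A\sim A^{(q)}$ does hold, but only because the exponent sets in Lemma \ref{lem sec 4 4} are symmetric about $h$ (closed under $s\mapsto h-s$ on the odd part and $s\mapsto 2h-s$ on the even part, including the wrap-around block $[2h-2t+1,\dots,h+2t-1]$, which the paper handles by writing its elements as $\tfrac{3h}{2}+s$). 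As proposed, your verification of condition (2) of Lemma \ref{lem two youjie 2} therefore fails.

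Second, the parameter matching is off. For part (1) of the theorem one must use Lemma \ref{lem sec 4 4}(2) with $t=\frac{r-h+1}{2}$, so that $(h+2t)\frac{q+1}{2h}-1=(r+1)\frac{q+1}{2h}-1$; your choice $t=\frac{r-h-1}{2}$ (with Lemma \ref{lem sec 4 4}(1)) only yields $k\le r\frac{q+1}{2h}-\frac{3}{2}$, strictly short of the stated range. For part (2) one must use Lemma \ref{lem sec 4 4}(1) with $t=\frac{r-h-1}{2}$; your $t=\frac{r-h}{2}$ is not an integer (since $r$ is odd and $h$ is even), so Lemma \ref{lem sec 4 4}(2) cannot be invoked with it at all. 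Both errors are repairable inside your framework, but as written the argument neither establishes the claimed bounds on $k$ nor correctly justifies the descent of the solution $\bm{u}$ to $\F_q$.
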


\begin{proof}
(1) Define $\bm{a}$, $\bm{v}$ as in Theorem \ref{th 11}. Since $1\leq k\leq (h+2t)\frac{q+1}{2h}-1$,
where $t=\frac{r-h+1}{2}$,
by Lemma \ref{lem sec 4 4} (2),
when $m\mid (qi+j)$, $(i,j)\neq (0,0)$, we have
$\langle \bm{a}^{qi+j},\bm{v}^{q+1}\rangle =m\sum_{l=1}^r\alpha^{i_ls}v_l^{q+1},$
where $s\in \{1,3,\dots,h-1\}\cup \{h-2t+2, h-2t+4,\dots, h+2t-2\}$. Let
$$B=\begin{pmatrix}
  \alpha^{i_1} &  \alpha^{i_2} & \dots & \alpha^{i_r}\\
  \alpha^{3i_1} &  \alpha^{3i_2} & \dots & \alpha^{3i_r}\\
  \vdots & \vdots &  \ddots & \vdots\\
  \alpha^{(h-1)i_1} &  \alpha^{(h-1)i_2} & \dots & \alpha^{(h-1)i_r}\\
\end{pmatrix},$$
$$C_1=\begin{pmatrix}
  \alpha^{(h-2t+2)i_1} &  \alpha^{(h-2t+2)i_2} & \dots & \alpha^{(h-2t+2)i_r}\\
  \alpha^{(h-2t+4)i_1} &  \alpha^{(h-2t+4)i_2} & \dots & \alpha^{(h-2t+4)i_r}\\
  \vdots & \vdots &  \ddots & \vdots\\
  \alpha^{(h+2t-2)i_1} &  \alpha^{(h+2t-2)i_2} & \dots & \alpha^{(h+2t-2)i_r}\\
\end{pmatrix}
\quad and\quad A=\begin{pmatrix}
1& \bm{1}_r\\
\bm{0}^T & B\\
\bm{0}^T & C_1\\
\end{pmatrix}
$$
be the matrices over $\F_{q^2}$, respectively.
Let
$$\tilde{A}=\begin{pmatrix}
1& 1&  1&  \dots &1\\
0&  \alpha^{i_1} &  \alpha^{i_2} & \dots & \alpha^{i_r}\\
 \vdots &   \vdots & \vdots &  \ddots & \vdots\\
 0& \alpha^{(h+2t-2)i_1} &  \alpha^{(h+2t-2)i_2} & \dots & \alpha^{(h+2t-2)i_r}\\
\end{pmatrix}$$
be an $r\times (r+1)$ matrix over $\F_{q^2}$.
Similar to the proof of Theorem \ref{th 11}, we can obtain that
the equation $A\bm{u}^T=\bm{0}^{T}$ has a solution $\bm{u}=(u_0,u_1,\dots,u_r)\in (\F_{q^2}^*)^{r+1}$.
Note that $\alpha^h=-1$, $\alpha^q=-\alpha^{-1}$ and $\alpha^{2h}=1$, then for any odd number $1\leq s\leq h-1$, and for any $1\leq j\leq r$, we have
\[\begin{split} \alpha^{si_jq}&=(-\alpha^{-s})^{i_j}\\
                           &=\alpha^{(h-s)i_j}.\\
	\end{split}\]
Then, $B$ is row equivalent to $B^{(q)}$.
For any even number $-2t\leq s\leq 2t$, and for any $1\leq j\leq r$, we have
\[\begin{split}        \alpha^{(h+s)i_jq}&=(\alpha^{-h-s})^{i_j}\\
                           &=\alpha^{(h-s)i_j}.\\
	\end{split}\]
Then, $C_1$ is row equivalent to $C_1^{(q)}$.
It follows that $A$ is row equivalent to $A^{(q)}$.
By Lemma \ref{lem two youjie 2}, the equation $A\bm{u}^T=\bm{0}^{T}$ has a solution $\bm{u}=(u_0,u_1,\dots,u_r)\in (\F_{q}^*)^{r+1}$.

(2)
Since $1\leq k\leq (h+2t+1)\frac{q+1}{2h}-\frac{3}{2}$,
where $t=\frac{r-h-1}{2}$,
by Lemma \ref{lem sec 4 4} (1),
when $m\mid (qi+j)$, $(i,j)\neq (0,0)$, we have
$\langle \bm{a}^{qi+j},\bm{v}^{q+1}\rangle =m\sum_{l=1}^r\alpha^{i_ls}v_l^{q+1},$
where $s\in \{1,3,\dots,h-1\}\cup \{h-2t, h-2t+4,\dots, h+2t\}\cup \{2h-2t+1,2h-2t+3,\dots,h+2t-1\}$.
Let
$$D=\begin{pmatrix}
  \alpha^{(2h-2t+1)i_1} &  \alpha^{(2h-2t+1)i_2} & \dots & \alpha^{(2h-2t+1)i_r}\\
  \alpha^{(2h-2t+3)i_1} &  \alpha^{(2h-2t+3)i_2} & \dots & \alpha^{(2h-2t+3)i_r}\\
  \vdots & \vdots &  \ddots & \vdots\\
  \alpha^{(h+2t-1)i_1} &  \alpha^{(h+2t-1)i_2} & \dots & \alpha^{(h+2t-1)i_r}\\
\end{pmatrix}$$
$$C_2=\begin{pmatrix}
  \alpha^{(h-2t)i_1} &  \alpha^{(h-2t)i_2} & \dots & \alpha^{(h-2t)i_r}\\
  \alpha^{(h-2t+2)i_1} &  \alpha^{(h-2t+2)i_2} & \dots & \alpha^{(h-2t+2)i_r}\\
  \vdots & \vdots &  \ddots & \vdots\\
  \alpha^{(h+2t)i_1} &  \alpha^{(h+2t)i_2} & \dots & \alpha^{(h+2t)i_r}\\
\end{pmatrix}
\quad and \quad
A=\begin{pmatrix}
1& \bm{1}_r\\
\bm{0}^T&  B\\
\bm{0}^T& C_2\\
\bm{0}^T& D\\
\end{pmatrix}
$$
be the matrices over $\F_{q^2}$, respectively.
Let
$$
\tilde{A}=\begin{pmatrix}
1& 1&  1&  \dots &1\\
0&  \alpha^{i_1} &  \alpha^{i_2} & \dots & \alpha^{i_r}\\
 \vdots &   \vdots & \vdots &  \ddots & \vdots\\
 0& \alpha^{(h+2t)i_1} &  \alpha^{(h+2t)i_2} & \dots & \alpha^{(h+2t)i_r}\\
\end{pmatrix}$$
be an $r\times (r+1)$ matrix over $\F_{q^2}$.
Similarly, we can obtain that
the equation $A\bm{u}^T=\bm{0}^{T}$ has a solution $\bm{u}=(u_0,u_1,\dots,u_r)\in (\F_{q^2}^*)^{r+1}$ and $C_2$ is row equivalent to ${C_2}^{(q)}$.
For any number $s\in \{\frac{h}{2}-2t+1,\frac{h}{2}-2t+3,\dots,-\frac{h}{2}+2t-1\}$, and for any $1\leq j\leq r$, we have
\[\begin{split}        \alpha^{(\frac{3h}{2}+s)i_jq}&=(-\alpha^{-\frac{3h}{2}-s})^{i_j}\\
                           &=(-\alpha^{\frac{h}{2}-s})^{i_j}\\
                           &=(\alpha^{\frac{3h}{2}-s})^{i_j}.\\
	\end{split}\]
Hence, $D$ is row equivalent to ${D}^{(q)}$.
It follows that $A$ is row equivalent to $A^{(q)}$.
By Lemma \ref{lem two youjie 2}, the equation $A\bm{u}^T=\bm{0}^{T}$ has a solution $\bm{u}=(u_0,u_1,\dots,u_r)\in (\F_{q}^*)^{r+1}$.

The following proof is similar to Theorem \ref{th 11}.
We omit the details. This completes the proof.
\end{proof}

\begin{example}
By taking $h=14,18$ in Theorem \ref{th 33}, we get the following quantum MDS codes.
\begin{itemize}
\item[(1)]
When $\frac{q+1}{14}=2\tau+1$ in Theorem \ref{th 33} (1),
let $r=19$,
then there exists a quantum MDS code with parameters
 $[[\frac{19}{28}(q^2-1)+1,\frac{19}{28}(q^2-1)-2k+1,k+1]]_q$
for any $1\leq k\leq \frac{5(q+1)}{7}-1$.
\item[(2)]
When $\frac{q+1}{18}=2\tau+1$ in Theorem \ref{th 33} (2),
let $r=35$, then there exists a quantum MDS code with parameters $[[\frac{35}{36}(q^2-1)+1,\frac{35}{36}(q^2-1)-2k+1,k+1]]_q$
for any $1\leq k\leq \frac{35(q+1)}{36}-\frac{3}{2}$.
\end{itemize}
\end{example}

\subsection{quantum MDS codes of length $n=r\frac{q^2-1}{2h}$, where $\frac{q+1}{h}$ is odd}

Throughout this section,
let $h$ be an even integer such that $\frac{q+1}{h}=2\tau+1$ for some $\tau \geq 1$.

\begin{lemma}\label{lem sec 4 5}
Let $q$, $h$, $\tau$ and $m$ be defined as above. The following statements hold.
\begin{itemize}
\item[(1)] Suppose $1\leq k\leq (h+2t+1)\frac{q+1}{2h}-\frac{3}{2}$, where $0\leq t<\frac{h}{2}$. Then for any $0\leq i,j\leq k-1$, $qi+j+\frac{q+1}{2}=sm$ if and only if
$s\in \{2,4,\dots,h-2\}\cup [h-2t+1,h-2t+3,\dots,h+2t-1]\cup[2h-2t,2h-2t+2,\dots,h+2t]$.
\item[(2)] Suppose $1\leq k\leq (h+2t)\frac{q+1}{2h}-2$, where $1\leq t\leq \frac{h}{2}$. Then for any $0\leq i,j\leq k-1$, $qi+j+\frac{q+1}{2}=sm$ if and only if
$s\in \{2,4,\dots,h-2\}\cup [h-2t+1,h-2t+3,\dots,h+2t-1]\cup [2h-2t+2,2h-2t+4,\dots,h+2t-2]$.
\end{itemize}
\end{lemma}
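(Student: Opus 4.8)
The proof will follow the template of Lemmas~\ref{lem sec 4 1} and~\ref{lem sec 4 4}. Since part~(2) differs from part~(1) only in the bound imposed on $k$ (namely $k\le (h+2t)\frac{q+1}{2h}-2$ in place of $k\le (h+2t+1)\frac{q+1}{2h}-\frac{3}{2}$) and is established by the very same case analysis, with the two endpoints of each arithmetic progression tracked separately, I would prove only~(1) in full. The first step is to bound $s$: from $0\le i,j\le k-1<q-1$ one gets $0\le qi+j<q^{2}-1$, whence the equation $qi+j+\frac{q+1}{2}=sm=s\frac{q^{2}-1}{2h}$ forces $1\le s<2h$, exactly as in Lemma~\ref{lem sec4 3}.

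The second step is to split on the parity of $s$ — the natural dichotomy here, because $\frac{q+1}{h}=2\tau+1$ is odd and hence $q+1=h(2\tau+1)$. In each parity class I would rewrite $qi+j=s\frac{q^{2}-1}{2h}-\frac{q+1}{2}$ in the canonical form $iq+j$ with $0\le j<q$; since $0\le i,j\le k-1<q$, this determines $i$ and $j$ uniquely as explicit linear expressions in $s$, $h$, $\tau$, and $q$. As in Cases~1 and~2 of Lemmas~\ref{lem sec4 3} and~\ref{lem sec 4 4}, one expects to subdivide each parity class into two sub-ranges of $s$ (roughly $s\le h$ and $s\ge h$), since the quotient on division by $q$ jumps there. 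The delicate point is to carry out this division-with-remainder correctly: the half-integer $\frac{q+1}{2}$ must be combined with the half-integer part of $s\frac{q+1}{2h}$ — present precisely when $s(2\tau+1)$ is odd — so that both $i$ and $j$ come out as genuine integers in $[0,q-1)$ rather than being off by a carry.

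Finally, I would impose the two constraints $i\le k-1$ and $j\le k-1$. With $k\le (h+2t+1)\frac{q+1}{2h}-\frac{3}{2}$, the inequality $i\le k-1$ discards all sufficiently large $s$ in the relevant sub-range and $j\le k-1$ discards all sufficiently small $s$; the values of $s$ that survive both are exactly the three progressions $\{2,4,\dots,h-2\}$, $[h-2t+1,h-2t+3,\dots,h+2t-1]$, and $[2h-2t,2h-2t+2,\dots,h+2t]$ appearing in the statement, and conversely each such $s$ produces an admissible pair $(i,j)$, which gives the ``if'' direction. I expect the main obstacle to be precisely this boundary bookkeeping: the constants $-\frac{3}{2}$ in~(1) and $-2$ in~(2) are tuned so that the endpoint values $s=h-2t$, $s=2h-2t$, $s=h+2t$, $s=h+2t-1$, etc. land on the correct side of each inequality, so these endpoints must be checked one at a time; everything else reduces to the routine manipulations already performed in the earlier lemmas.
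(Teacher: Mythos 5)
Your outline matches the paper's proof step for step: bound $1\le s<2h$, split on the parity of $s$, rewrite $qi+j=s\frac{q^2-1}{2h}-\frac{q+1}{2}$ in canonical form $iq+j$ with $0\le j<q$ to read off $i$ and $j$ as explicit linear expressions in $s$, and then let the constraints $i\le k-1$ and $j\le k-1$ eliminate the inadmissible $s$, with part (2) dismissed as ``completely similar'' exactly as the paper does. The only minor inaccuracy in your prediction is that the sub-range split at $s=h$ is needed only in the even-$s$ case (for odd $s$ a single formula covers all $1\le s\le 2h-1$); since everything you defer is the routine division-with-remainder arithmetic the paper itself carries out, the approach is correct and essentially identical.
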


\begin{proof}
Since the proof of (2) is completely similar, we only prove (1).
Similarly to Lemma \ref{lem sec4 3}, we can obtain $1\leq s< 2h$ and divide the proof into the following two parts.

$\bullet$ $\textbf{Case 1:}$ $s$ is odd.

For $1\leq s\leq 2h-1$,
we have
\[\begin{split} qi+j &=(\frac{s-1}{2}\frac{q+1}{h}+\tau)q-\frac{s-1}{2}\frac{q+1}{h}-\tau-1\\
                     &=(\frac{s-1}{2}\frac{q+1}{h}+\tau-1)q+q+1-\frac{s-1}{2}\frac{q+1}{h}-\tau-2\\
                     &=(\frac{s}{2}\frac{q+1}{h}-\frac{3}{2})q+\frac{2h-s}{2}\frac{q+1}{h}-\frac{3}{2}.\\
	\end{split}\]
It follows that
$$i=\frac{s}{2}\frac{q+1}{h}-\frac{3}{2},\quad j=\frac{2h-s}{2}\frac{q+1}{h}-\frac{3}{2}.$$

If $s\geq h+2t+1$, then
$$i=\frac{s}{2}\frac{q+1}{h}-\frac{3}{2}\geq (h+2t+1)\frac{q+1}{2h}-\frac{3}{2}\geq k,$$
which contradicts to the fact $i\leq k-1$;

If $s\leq h-2t-1$, then
$$j=\frac{2h-s}{2}\frac{q+1}{h}-\frac{3}{2}\geq  (h+2t+1)\frac{q+1}{2h}-\frac{3}{2}\geq k,$$
which contradicts to the fact $j\leq k-1$.
Hence, we have $s\in [h-2t+1,h-2t+3,\dots,h+2t-1]$.

$\bullet$ $\textbf{Case 2:}$ $s$ is even.

For $h\leq s\leq 2h$, we have
\[\begin{split} qi+j &=\frac{s}{2}\frac{q+1}{h}q-\frac{s}{2}\frac{q+1}{h}-\frac{q+1}{2}\\
                     &=(\frac{s}{2}\frac{q+1}{h}-2)q+\frac{3h-s}{2}\frac{q+1}{h}-2.\\
	\end{split}\]
It follows that
$$i=\frac{s}{2}\frac{q+1}{h}-2,\quad j=\frac{3h-s}{2}\frac{q+1}{h}-2.$$

If $s\leq 2h-2t-2$, then
$$ j=\frac{3h-s}{2}\frac{q+1}{h}-2 \geq (h+2t+2)\frac{q+1}{2h}-2\geq k,$$
which contradicts to the fact $j\leq k-1$;

If $s\geq h+2t+2$, then
$$i=\frac{s(q+1)}{2h}-2\geq \frac{(h+2t+2)(q+1)}{2h}-2\geq k,$$
which contradicts to the fact $i\leq k-1$.
Hence, we have $s\in \{2,4,\dots,h-2\}\cup [2h-2t,2h-2t+2,\dots,h+2t]$.
This completes the proof.
\end{proof}

\begin{theorem}\label{th 44}
Let $q$ be an odd integer and $n=\frac{r(q^2-1)}{2h}$, where $\frac{q+1}{h}=2\tau+1$ for some $\tau\geq 1$.
The following statements hold.
\begin{itemize}
\item[(1)] For $\frac{h}{2}\leq r\leq h$ and $1\leq k\leq (h+1)\frac{q+1}{2h}-\frac{3}{2}$,
  there exists a quantum MDS code with parameters $[[n,n-2k,k+1]]_q$.
\item[(2)] For odd $h< r<\frac{3h}{2}$ and $1\leq k\leq (r+2)\frac{q+1}{2h}-\frac{3}{2}$,
  there exists a quantum MDS code with parameters $[[n,n-2k,k+1]]_q$.
\item[(3)] For odd $\frac{3h}{2}\leq r< 2h$ and $1\leq k\leq (r+1)\frac{q+1}{2h}-2$,
  there exists a quantum MDS code with parameters $[[n,n-2k,k+1]]_q$.
\end{itemize}
\end{theorem}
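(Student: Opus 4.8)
The plan is to follow the template of Theorems \ref{th 22} and \ref{th 33}. For all three parts I would use the same code locators and column multipliers as in Theorem \ref{th 22}: $\bm{a}=(\theta^{i_1}(\langle\gamma\rangle),\dots,\theta^{i_r}(\langle\gamma\rangle))\in\F_{q^2}^{n}$ with $i_1,\dots,i_r$ distinct modulo $2h$, and $\bm{v}=(v_1,v_1\theta^{h},\dots,v_1\theta^{(m-1)h},\dots,v_r,v_r\theta^{h},\dots,v_r\theta^{(m-1)h})$ with $v_1,\dots,v_r\in\F_{q^2}^{*}$. As in that proof one gets $\langle\bm{a}^{qi+j},\bm{v}^{q+1}\rangle=0$ whenever $m\nmid qi+j+\frac{q+1}{2}$, and when $m\mid qi+j+\frac{q+1}{2}$, writing $qi+j=sm-\frac{q+1}{2}$, one gets $\langle\bm{a}^{qi+j},\bm{v}^{q+1}\rangle=m\sum_{l=1}^{r}\alpha^{si_l}\xi^{i_l}v_l^{q+1}$. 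So by Lemma \ref{lem GRS panding} it suffices to pick $v_1,\dots,v_r\in\F_{q^2}^{*}$ with $\sum_{l=1}^{r}\alpha^{si_l}\xi^{i_l}v_l^{q+1}=0$ for every admissible $s$, where the admissible set $S$ is read off from Lemma \ref{lem sec 4 5}: part (1) uses Lemma \ref{lem sec 4 5}(1) with $t=0$; part (2) uses Lemma \ref{lem sec 4 5}(1) with $t=\frac{r+1-h}{2}$ (an integer satisfying $1\le t<\frac{h}{2}$ since $r$ is odd and $h<r<\frac{3h}{2}$); and part (3) uses Lemma \ref{lem sec 4 5}(2) with $t=\frac{r+1-h}{2}$ (an integer satisfying $1\le t\le\frac{h}{2}$ since $r$ is odd and $\frac{3h}{2}\le r<2h$). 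In each case the bound on $k$ in the theorem is exactly the bound appearing in the cited part of Lemma \ref{lem sec 4 5}.

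Writing $u_l=v_l^{q+1}$ and recalling that the norm map $\F_{q^2}^{*}\to\F_q^{*}$ is surjective, I would then solve $A\bm{u}^{T}=\bm{0}^{T}$ for $\bm{u}\in(\F_q^{*})^{r}$, where $A$ has entry $\alpha^{si_l}\xi^{i_l}$ in the row indexed by $s\in S$ and the column indexed by $l$. The place where this differs essentially from Theorem \ref{th 22} is that $\frac{q+1}{h}$ is odd, so $q\equiv h-1\pmod{2h}$ and hence $\alpha^{q}=-\alpha^{-1}$; consequently $A$ is not a matrix over $\F_q$, and one cannot invoke Lemma \ref{lem two youjie 1} directly. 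Instead, as in Theorems \ref{th 11}(2) and \ref{th 33}, I would first produce a solution over $\F_{q^2}$ and then descend. A short count from Lemma \ref{lem sec 4 5} gives $|S|<r$ in every case (with $|S|=\frac{h}{2}-1$ in part (1)), and after scaling column $l$ of $A$ by $\xi^{i_l}$, $A$ becomes a generalized Vandermonde matrix in the distinct nodes $\alpha^{i_l}$ (distinct because the $i_l$ are distinct modulo $2h$ and $\alpha$ has order $2h$); so $\mathrm{rank}(A)=\mathrm{rank}(A_1)=\dots=\mathrm{rank}(A_r)$ just as in Theorem \ref{th 11}, and Lemma \ref{lem two youjie 1} applied over $\F_{q^2}$ yields $\bm{u}\in(\F_{q^2}^{*})^{r}$ with $A\bm{u}^{T}=\bm{0}^{T}$. (In part (1) all admissible $s$ are even, so replacing $s$ by $s/2$ and also taking $i_1,\dots,i_r$ distinct modulo $h$ --- possible since $r\le h$ --- turns $A$ into an ordinary Vandermonde matrix; in part (3) one may instead enlarge $A$ to the matrix with the consecutive exponents $2,3,\dots,r$, whose rows include those of $A$ and which is an ordinary Vandermonde matrix of size $(r-1)\times r$.)

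For the descent I would use $\alpha^{q}=-\alpha^{-1}$, $\xi^{q}=-\xi$ and $\alpha^{h}=-1$ to compute $(\alpha^{si_l}\xi^{i_l})^{q}=\alpha^{(h-s)i_l}\xi^{i_l}$ for even $s$ and $(\alpha^{si_l}\xi^{i_l})^{q}=\alpha^{(2h-s)i_l}\xi^{i_l}$ for odd $s$. Thus taking $q$-th powers permutes the rows of $A$ by the involution $\sigma$ of $\mathbb{Z}/2h\mathbb{Z}$ with $\sigma(s)=h-s$ on even residues and $\sigma(s)=2h-s$ on odd residues, so $A^{(q)}$ is a row permutation of $A$ --- hence row equivalent to it --- as soon as $\sigma(S)=S$. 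Checking $\sigma(S)=S$ is the crux: the even block $\{2,4,\dots,h-2\}$ is symmetric about $h/2$ under $s\mapsto h-s$; the odd block $[h-2t+1,h-2t+3,\dots,h+2t-1]$ is symmetric about $h$ under $s\mapsto 2h-s$; and the remaining even block ($[2h-2t,\dots,h+2t]$ in parts (1)--(2), $[2h-2t+2,\dots,h+2t-2]$ in part (3)) is symmetric about $3h/2$, hence setwise fixed by $s\mapsto h-s\pmod{2h}$. Granting this, Lemma \ref{lem two youjie 2} applies because $r<2h\le\frac{2(q+1)}{3}<q+1$, and yields $\bm{u}\in(\F_q^{*})^{r}$. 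Putting $v_l^{q+1}=u_l$ then forces $\langle\bm{a}^{qi+j},\bm{v}^{q+1}\rangle=0$ for all $0\le i,j\le k-1$, so $\GRS_k(\bm{a},\bm{v})$ is an $[n,k,n-k+1]_{q^2}$ Hermitian self-orthogonal code and Lemma \ref{Lem zhuyao} gives the quantum MDS code $[[n,n-2k,k+1]]_q$.

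The main obstacle, I expect, is the case-by-case bookkeeping of the three-block admissible set $S$ across the three ranges of $r$: one must simultaneously verify the cardinality bound $|S|<r$ (feeding the Vandermonde rank computation behind Lemma \ref{lem two youjie 1}) and the symmetry $\sigma(S)=S$ (behind the descent via Lemma \ref{lem two youjie 2}). The endpoints $h-2t+1$, $h+2t-1$, $2h-2t$ and $h+2t$ --- and their shifts in part (3) --- are precisely calibrated so that both conditions hold in each range, and keeping these details consistent, rather than any single inequality, is where the real work lies.
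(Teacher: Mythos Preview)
Your proposal is correct and follows essentially the same approach as the paper: the paper also sets up $\bm{a}$ and $\bm{v}$ as in Theorem~\ref{th 22}, invokes Lemma~\ref{lem sec 4 5} with the same values of $t$ in each part, builds the coefficient matrix (split there into blocks $B$, $C$, $D$ according to the three pieces of $S$), enlarges to a consecutive-exponent $\tilde{A}$ to get an $\F_{q^2}^*$-solution via Lemma~\ref{lem two youjie 1}, and then checks the same identities $(\alpha^{si_l}\xi^{i_l})^q=\alpha^{(h-s)i_l}\xi^{i_l}$ (even $s$) and $(\alpha^{si_l}\xi^{i_l})^q=\alpha^{(2h-s)i_l}\xi^{i_l}$ (odd $s$) block by block to apply Lemma~\ref{lem two youjie 2}. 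Your packaging of the row-equivalence check as the single involution $\sigma$ on $\mathbb{Z}/2h\mathbb{Z}$ with $\sigma(S)=S$ is a tidier way to say exactly what the paper verifies piecewise.
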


\begin{proof}
(1)
Define $\bm{a}$, $\bm{v}$ as in Theorem \ref{th 22}.
Since $1\leq k\leq (h+1)\frac{q+1}{2h}-\frac{3}{2}$, where $t=0$,
by Lemma \ref{lem sec 4 5} (1),
when $m\mid (qi+j+\frac{q+1}{2})$, we have
$\langle \bm{a}^{qi+j},\bm{v}^{q+1}\rangle =m\sum_{l=1}^r\alpha^{i_ls}\xi^{i_l}v_l^{q+1},$
where $s\in \{2,4,\dots,h-2\}$.
Since $r\leq h$, we can assume that $i_1,i_2,\dots,i_r$ are distinct modulo $h$.
Let
$$B=\begin{pmatrix}
  \beta^{i_1} \xi^{i_1} &  \beta^{i_2} \xi^{i_2} & \dots & \beta^{i_r}\xi^{i_r}\\
  \beta^{2i_1} \xi^{i_1} &  \beta^{2i_2} \xi^{i_2} & \dots & \beta^{2i_r}\xi^{i_r}\\
  \vdots & \vdots &  \ddots & \vdots\\
  \beta^{(\frac{h}{2}-1)i_1} \xi^{i_1} &  \beta^{(\frac{h}{2}-1)i_2} \xi^{i_2} & \dots & \beta^{(\frac{h}{2}-1)i_r}\xi^{i_r}\\
\end{pmatrix}$$
be an $(\frac{h}{2}-1)\times r$ matrix over $\F_{q^2}$.
Let $B_i\ (1\leq i\leq r)$ be the $(\frac{h}{2}-1)\times (r-1)$ matrix obtained from $B$ by deleting the $i$-th column.
We can get $rank(B)=rank(B_1)=\dots=rank(B_r)=\frac{h}{2}-1$.
Note that $\beta^q=\beta^{-1}$, $\beta^{\frac{h}{2}}=-1$ and $\xi^q=-\xi$,
then for any $1\leq s\leq \frac{h}{2}-1$, and for any $1\leq j\leq r$, we have
\[\begin{split}        (\beta^{si_j}\xi^{i_j})^q&=(-\beta^{-s}\xi)^{i_j}\\
                           &=\beta^{(\frac{h}{2}-s)i_j}\xi^{i_j}.\\
	\end{split}\]
It follows that $B$ is row equivalent to $B^{(q)}$.
By Lemma \ref{lem two youjie 1} and Lemma \ref{lem two youjie 2}, the equation $B\bm{u}^T=\bm{0}^{T}$ has a solution $\bm{u}=(u_1,u_2,\dots,u_r)\in (\F_{q}^*)^{r}$.

(2)
Since $1\leq k\leq (h+2t+1)\frac{q+1}{2h}-\frac{3}{2}$, where $t=\frac{r-h+1}{2}$,
by Lemma \ref{lem sec 4 5} (1),
when $m\mid (qi+j+\frac{q+1}{2})$, we have
$\langle \bm{a}^{qi+j},\bm{v}^{q+1}\rangle=m\sum_{l=1}^r\alpha^{i_ls}\xi^{i_l}v_l^{q+1},$
where $s\in \{2,4,\dots,h-2\}\cup \{h-2t+1,h-2t+3,\dots,h+2t-1\}$.
Let
$$C=\begin{pmatrix}
  \alpha^{(h-2t+1)i_1} \xi^{i_1} &  \alpha^{(h-2t+1)i_2} \xi^{i_2} & \dots & \alpha^{(h-2t+1)i_r} \xi^{i_r}\\
  \alpha^{(h-2t+3)i_1} \xi^{i_1} &  \alpha^{(h-2t+3)i_2} \xi^{i_2} & \dots & \alpha^{(h-2t+3)i_r}\xi^{i_r}\\
  \vdots & \vdots &  \ddots & \vdots\\
  \alpha^{(h+2t-1)i_1} \xi^{i_1} &  \alpha^{(h+2t-1)i_2} \xi^{i_2} & \dots & \alpha^{(h+2t-1)i_r}\xi^{i_r}\\
\end{pmatrix}
\quad and\quad A=\begin{pmatrix}
 B\\
 C\\
\end{pmatrix}
$$
be the matrices over $\F_{q^2}$, respectively.
Let
$$\tilde{A}=\begin{pmatrix}
\alpha^{2i_1} \xi^{i_1} &  \alpha^{2i_2}\xi^{i_2} & \dots & \alpha^{2i_r}\xi^{i_r}\\
\alpha^{3i_1} \xi^{i_1} &  \alpha^{3i_2} \xi^{i_2} & \dots & \alpha^{3i_r}\xi^{i_r}\\
 \vdots & \vdots &  \ddots & \vdots\\
\alpha^{(h+2t-1)i_1} \xi^{i_1} &  \alpha^{(h+2t-1)i_2}\xi^{i_2} & \dots & \alpha^{(h+2t-1)i_r}\xi^{i_r}\\
\end{pmatrix}$$
be an $(r-1)\times r$ matrix over $\F_{q^2}$.
Similar to the proof of Theorem \ref{th 11}, we can obtain that the equation $A\bm{u}^T=\bm{0}^{T}$ has a solution $\bm{u}=(u_1,u_2,\dots,u_r)\in (\F_{q^2}^*)^{r}$.
Note that $\alpha^h=-1$, $\alpha^q=-\alpha^{-1}$ and $\alpha^{2h}=1$, then for any odd number $-2t+1\leq s\leq 2t-1$, and for any $1\leq j\leq r$, we have
\[\begin{split}        (\alpha^{(h+s)i_j}\xi^{i_j})^q&=(\alpha^{-h-s}\xi)^{i_j}\\
                           &=\alpha^{(h-s)i_j}\xi^{i_j}.\\
	\end{split}\]
Then, $C$ is row equivalent to $C^{(q)}$.
It follows that $A$ is row equivalent to $A^{(q)}$.
By Lemma \ref{lem two youjie 2}, the equation $A\bm{u}^T=\bm{0}^{T}$ has a solution $\bm{u}=(u_1,u_2,\dots,u_r)\in (\F_{q}^*)^{r}$.

(3)
Since $1\leq k\leq (h+2t)\frac{q+1}{2h}-2$,
where $t=\frac{r-h+1}{2}$,
by Lemma \ref{lem sec 4 5} (2),
when $m\mid (qi+j+\frac{q+1}{2})$, we have
$\langle \bm{a}^{qi+j},\bm{v}^{q+1}\rangle=m\sum_{l=1}^r\alpha^{i_ls}\xi^{i_l}v_l^{q+1},$
where $s\in \{2,4,\dots,h-2\}\cup \{h-2t+1,h-2t+3,\dots,h+2t-1\}\cup \{2h-2t+2,2h-2t+4,\dots,h+2t-2\}$.
Let
$$D=\begin{pmatrix}
  \alpha^{(2h-2t+2)i_1} \xi^{i_1} &  \alpha^{(2h-2t+2)i_2}\xi^{i_2}  & \dots & \alpha^{(2h-2t+2)i_r}\xi^{i_r} \\
  \alpha^{(2h-2t+4)i_1}\xi^{i_1}  &  \alpha^{(2h-2t+4)i_2} \xi^{i_2} & \dots & \alpha^{(2h-2t+4)i_r}\xi^{i_r} \\
  \vdots & \vdots &  \ddots & \vdots\\
  \alpha^{(h+2t-2)i_1}\xi^{i_1}  &  \alpha^{(h+2t-2)i_2}\xi^{i_2}  & \dots & \alpha^{(h+2t-2)i_r}\xi^{i_r} \\
\end{pmatrix}
\quad and \quad
A=\begin{pmatrix}
 B\\
 C\\
D\\
\end{pmatrix}
$$
be the matrices over $\F_{q^2}$, respectively.
Let $\tilde{A}$ be defined as in (2).
Similarly, we can obtain that
the equation $A\bm{u}^T=\bm{0}^{T}$ has a solution $\bm{u}=(u_1,u_2,\dots,u_r)\in (\F_{q^2}^*)^{r}$.
For any number $s\in \{\frac{h}{2}-2t+2,\frac{h}{2}-2t+4,\dots,-\frac{h}{2}+2t-2\}$, and for any $1\leq j\leq r$, we have
\[\begin{split}        (\alpha^{(\frac{3h}{2}+s)i_j}\xi^{i_j})^q&=(-\alpha^{-\frac{3h}{2}-s}\xi)^{i_j}\\
                           &=(-\alpha^{\frac{h}{2}-s}\xi)^{i_j}\\
                           &=\alpha^{(\frac{3h}{2}-s)i_j}\xi^{i_j}\\
	\end{split}\]
Hence, $D$ is row equivalent to ${D}^{(q)}$.
It follows that $A$ is row equivalent to $A^{(q)}$.
By Lemma \ref{lem two youjie 2}, the equation $A\bm{u}^T=\bm{0}^{T}$ has a solution $\bm{u}=(u_1,u_2,\dots,u_r)\in (\F_{q}^*)^{r}$.

The following proof is similar to Theorem \ref{th 22}.
We omit the details. This completes the proof.
\end{proof}

In Theorem \ref{th 44} (3), take $r=2h-1$, we get the following result.

\begin{corollary}\label{cor 4}
Let $\frac{q+1}{h}=2\tau+1$ where $\tau\geq 1$.
Then for any $1\leq k\leq q-1$,
there exists a quantum MDS code with parameters
$[[\frac{(2h-1)(q^2-1)}{2h},\frac{(2h-1)(q^2-1)}{2h}-2k,k+1]]_q$.
\end{corollary}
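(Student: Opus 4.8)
The plan is to obtain this statement as the special case $r = 2h-1$ of Theorem~\ref{th 44}~(3). First I would verify that $r = 2h-1$ satisfies the hypotheses of that part. Since $h$ is an even positive integer, $r = 2h-1$ is odd, and the required inequality $\frac{3h}{2} \le r < 2h$ reduces to $\frac{3h}{2} \le 2h-1$, i.e.\ $h \ge 2$, which holds automatically (with equality $\frac{3h}{2} = r$ exactly when $h=2$). Correspondingly the parameter $t = \frac{r-h+1}{2} = \frac{h}{2}$ appearing in the proof of Theorem~\ref{th 44}~(3) sits at the top of the admissible range $1 \le t \le \frac{h}{2}$ of Lemma~\ref{lem sec 4 5}~(2), so that lemma still applies; in particular $n = \frac{r(q^2-1)}{2h} = (2h-1)m$ is an integer because $q^2-1 = 2hm$.

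Next I would substitute $r = 2h-1$ into the conclusion of Theorem~\ref{th 44}~(3). The length becomes $n = \frac{(2h-1)(q^2-1)}{2h}$, and the admissible range of the dimension parameter becomes
$$1 \le k \le (r+1)\frac{q+1}{2h} - 2 = 2h\cdot\frac{q+1}{2h} - 2 = (q+1) - 2 = q-1.$$
Hence for every $1 \le k \le q-1$ there exists a quantum MDS code with parameters $[[n,\,n-2k,\,k+1]]_q$, which is precisely the asserted family.

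Since the result is a direct specialization, there is no genuine obstacle; the only point needing (minor) care is confirming that $r = 2h-1$—the largest odd integer below $2h$—indeed lies in the range of Theorem~\ref{th 44}~(3) and that the extremal value $t = h/2$ is still covered by Lemma~\ref{lem sec 4 5}~(2). The reason this particular $r$ deserves to be recorded separately is that it makes the quantity $(r+1)\frac{q+1}{2h}$ collapse to the integer $q+1$, producing the clean bound $k \le q-1$ (equivalently, minimum distance up to $q$, which exceeds $q/2+1$ for $q > 2$).
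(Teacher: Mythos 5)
Your proposal is correct and is exactly the paper's route: the corollary is obtained by setting $r=2h-1$ in Theorem~\ref{th 44}~(3), whereupon $(r+1)\frac{q+1}{2h}-2=q-1$. Your additional checks that $r=2h-1$ is odd and satisfies $\frac{3h}{2}\leq r<2h$ (equivalently $h\geq 2$) are the only verifications needed, and they are carried out correctly.
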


\begin{example}
By taking $h=4,8,12$ in Theorem \ref{th 44}, we get the following quantum MDS codes.
\begin{itemize}
\item[(1)]
When $\frac{q+1}{4}=2\tau+1$ in Theorem \ref{th 44} (1),
let $r=3$,
then there exists a quantum MDS code with parameters
 $[[\frac{3}{8}(q^2-1),\frac{3}{8}(q^2-1)-2k,k+1]]_q$
for any $1\leq k\leq \frac{5(q+1)}{8}-\frac{3}{2}$.
\item[(2)]
When $\frac{q+1}{8}=2\tau+1$ in Theorem \ref{th 44} (2),
let $r=11$, then there exists a quantum MDS code with parameters $[[\frac{11}{16}(q^2-1),\frac{11}{16}(q^2-1)-2k,k+1]]_q$
for any $1\leq k\leq \frac{13(q+1)}{16}-\frac{3}{2}$.
\item[(3)]
When $\frac{q+1}{12}=2\tau+1$ in Theorem \ref{th 44} (3),
let $r=23$, then there exists a quantum MDS code with parameters $[[\frac{23}{24}(q^2-1),\frac{23}{24}(q^2-1)-2k,k+1]]_q$
for any $1\leq k\leq q-1$.
\end{itemize}
\end{example}

\section{Comparison}\label{sec5}

In this section, we compare our results with previous results to show that the codes constructed in this paper are new and have large minimum distance.
Firstly, we classify the lengths of known quantum MDS codes and summarize the results only for the largest known minimum distance.
We summarize some known results of quantum MDS codes in Table \ref{tab:1}.
With the above classification of lengths, we also summarize the quantum MDS codes obtained in this paper in Table \ref{tab:3}.

From Table \ref{tab:1}, it is easy to see that the lengths of most known quantum MDS codes satisfy $n\equiv 0,1($mod$\,q\pm1)$, except for some special lengths.
However, for the quantum MDS codes constructed in this paper, most of the lengths satisfy $n\equiv 0,1($mod$\,\frac{q\pm1}{2})$.
So when $r$ is odd, we construct quantum MDS codes with different lengths from the known results.
Moreover, we can find that the minimum distance of the quantum MDS codes constructed in this paper is close to $q$ when $r$ is large enough. More specifically, for some lengths the minimum distance can reach $q$, which is the largest minimum distance that can be constructed by GRS for that length (see Corollary \ref{cor 4}).
Hence, the quantum MDS codes constructed in this paper have large minimum distance.
In addition, we can see that for the cases with the same length, they can also be considered as a special case of our constructed codes. And the minimum distance of some known quantum MDS codes has been improved (see Corollaries \ref{cor 1}, \ref{cor 3} and Theorem \ref{th 11} (1)).

It is worth noting that applying the propagation rules to the codes in Table \ref{tab:1} also fails to yield the quantum MDS codes constructed in this paper.
Since most of the lengths of the quantum MDS codes in Table \ref{tab:1} satisfy $n\equiv 0,1($mod$\,q\pm1)$, the propagation rule can be used to obtain quantum codes of length $n\equiv 0,1($mod$\,\frac{q\pm1}{2})$. However, we can known that the minimum distance $d\leq q/2$ of these codes. Thus the quantum MDS codes constructed in this paper have new parameters.
Finally, by the propagation rules we can also obtain some new quantum MDS codes by the codes constructed in this paper.

\newcommand{\tabincell}[2]{\begin{tabular}{@{}#1@{}}#2\end{tabular}}
\begin{table}
\caption{Some known results of $[[n,n-2d+2,d]]_q$-quantum MDS codes with distances bigger than $q/2+1$}
\label{tab:1}
\begin{center}
\resizebox{\textwidth}{88mm}{
	\begin{tabular}{c|ccc}
		\hline
		Forms of $n$  & Length $n$ & Minimum Distance $d$ & References\\
		\hline
		$n\leq q+1$  &  $n\leq q+1$ & $2\leq d\leq \frac{n}{2}+1$ &  \cite{RefJ (2004) n<q+1(1),RefJ (2004) n<q+1(2)} \\
        \hline
         \multirow{4}{*}{$n\mid q^2+1$} & $n=q^2+1$ & $2\leq d\leq q+1$, $d\neq q$ & \cite{RefJ (2011) n=q^2+1(1),RefJ (2015) n=q^2+1(2),RefJ (2013) n=q^2+1(3),RefJ (2008) n=q^2+1(4),RefJ (2021) n=q^2+1(5)}\\
        \cline{2-4}
         \multirow{4}{*}{} & $n=\frac{q^2+1}{2}$, $q$ odd &  $2\leq d\leq q$, $d$ odd& \cite{RefJ (2013) n=q^2+1(3),RefJ (2014) L.jin} \\
        \cline{2-4}
        \multirow{4}{*}{} &$n=\frac{q^2+1}{5}$, $q\equiv \pm 3({\rm mod}\ 10)$ & $2\leq d\leq \frac{3q\pm 1}{5}$, $d$ even & \cite{RefJ (2014) kai,RefJ (2015) T.Zhang,RefJ (2016) L.Hu} \\
        \cline{2-4}
        \multirow{4}{*}{} &$n=\frac{q^2+1}{5}$,  $q\equiv \pm 2({\rm mod}\ 10)$ & $2\leq d\leq \frac{3q\mp 1}{5}$, $d$ odd & \cite{RefJ (2016) S.Li} \\
        \hline
        $(q+1)\mid (n-2)$ & $n=r(q+1)+2$, $1\leq r\leq q-1$ & \tabincell{c}{$2\leq d\leq r+2$,\\ $(p,r,d)\neq(2,q-1,q)$}& \cite{RefJ (2018)W. two} \\
        \hline
         $(q+1)\mid (n-1)$&  $n=r\frac{q^2-1}{h}+1$, $h|(q-1)$, $1\leq r\leq h$ & $2\leq d\leq r\frac{q-1}{h}+1$ & \cite{RefJ (2019)W. some} \\
        \hline
         \multirow{3}{*}{$(q+1)\mid n$}& $n=r\frac{q^2-1}{h}$, $h|(q-1)$, $1\leq r\leq h$ & $2\leq d\leq r\frac{q-1}{h}+1$ & \cite{RefJ (2017) X.Shi,RefJ (2018) X.Shi CC,RefJ (2019)W. some,RefJ (2023) R. Wan}  \\
        \cline{2-4}
         \multirow{3}{*}{} &$n=r\frac{q^2-1}{h}$, $2\mid h$, $h|(q-1)$, $1\leq r\leq \frac{h}{2}$ & $2\leq d\leq (\frac{h}{2}+r)\frac{q-1}{h}+1$ & \cite{RefJ (2014) kai,RefJ (2015) B.chen,RefJ (2016) X.He,RefJ (2017)Lem GRS,RefJ (2023) R. Wan} \\
        \cline{2-4}
         \multirow{3}{*}{}& $n=r\frac{q^2-1}{h}$, $2\mid h$, $h|(q-1)$, $\frac{h}{2}< r\leq h$ & $2\leq d\leq \lfloor \frac{h+r}{2}\rfloor\frac{q-1}{h}+1$ & \cite{RefJ (2017)Lem GRS,RefJ (2023) R. Wan} \\
        \hline
         \multirow{2}{*}{\tabincell{c}{$(q+1)\nmid n$,\\ $\frac{q+1}{2}\mid n$}}  & \tabincell{c}{$n=(\frac{h}{2}+1)\frac{q^2-1}{2h}$, $\frac{q-1}{h}=2\tau+1$,\\ $\tau\geq 1$, $\frac{h}{2}+1$ odd} & $2\leq d\leq (h+1)\frac{q-1}{2h}+\frac{1}{2}$ & \cite{RefJ (2019) F.Tian}\\
         \cline{2-4}
         \multirow{3}{*}{} & \tabincell{c}{$n=(\frac{h}{2}+2)\frac{q^2-1}{2h}$, $\frac{q-1}{h}=2\tau+1$,\\ $\tau\geq 1$, $\frac{h}{2}+2$ odd} & $2\leq d\leq (h+2)\frac{q-1}{2h}+1$ & \cite{RefJ (2019) F.Tian}\\
        \hline
         $p\mid n$ & \tabincell{c}{$n=tr^e$, $1\leq t\leq r$, \\ $q^2=r^m$, $1\leq e\leq m-1$}& $2\leq d\leq \lfloor  \frac{tr^e+q-1}{q+1} \rfloor+1$& \cite{RefJ (2008) n=q^2+1(4),RefJ (2018)W. two,RefJ (2020)W. hull}\\
        \hline
         $(q-1)\mid (n-1)$& \tabincell{c}{ $n=r\frac{q^2-1}{h}+1$, $h|(q+1)$,\\ $1\leq r< \min \{q,h\}$} & $2\leq d\leq \lfloor\frac{h+r}{2}\rfloor\frac{q+1}{h}$ & \cite{RefJ (2014) L.jin,RefJ (2016) X.He,RefJ (2017) L.Jin,RefJ (2019)W. some,RefJ (2023) R. Wan}  \\
        \hline
         $(q-1)\mid n$& $n=r\frac{q^2-1}{h}$, $h|(q+1)$, $1\leq r<h$  &  $2\leq d\leq \lceil\frac{h+r}{2}\rceil\frac{q+1}{h}-1$& \tabincell{c}{\cite{RefJ (2015) L.W,RefJ (2015) B.chen,RefJ (2016) X.He,RefJ (2017) X.Shi}\\ \cite{RefJ (2017) L.Jin,RefJ (2019)W. some,RefJ (2023) R. Wan}}\\
        \hline
          \multirow{3}{*}{Other lengths}  & \tabincell{c}{$n=r\frac{q^2-1}{s}+l\frac{q^2-1}{t}-rl\frac{q^2-1}{st}$,\\ odd $s\mid (q+1)$, even $t\mid (q-1)$, \\$r\leq s-1$, $l\leq t$, $rl\frac{q^2-1}{st}< q-1$ }
        & \tabincell{c}{$2\leq d\leq \min\{\lfloor \frac{s+r}{2}\rfloor\frac{q+1}{s}-1,$\\$ (\frac{t}{2}+1)\frac{q-1}{t}+1\}$} & \cite{RefJ (2020) X.Fang} \\
        \cline{2-4}
        \multirow{3}{*}{} & \tabincell{c}{$n=r\frac{q^2-1}{s}+l\frac{q^2-1}{t}-rl\frac{q^2-1}{st}+1$,\\ odd $s\mid (q+1)$, even $t\mid (q-1)$,\\ odd $r\leq s-1$, $l\leq t$, $rl\frac{q^2-1}{st}< q-1$ }
        &  \tabincell{c}{$2\leq d\leq \min\{ \frac{s+r}{2}\frac{q+1}{s},$\\$ (\frac{t}{2}+1)\frac{q-1}{t}+1\}$} & \cite{RefJ (2020) X.Fang} \\
        \cline{2-4}
          \multirow{3}{*}{} & \tabincell{c}{$n=r\frac{q^2-1}{s}+l\frac{q^2-1}{t}$, even $s\mid (q+1)$,\\ even $t\mid (q-1)$, $r\leq \frac{s}{2}$, $l\leq \frac{t}{2}$}
        & \tabincell{c}{ $2\leq d\leq \min\{ \lfloor \frac{s+r}{2}\rfloor\frac{q+1}{s}-1,$\\$ (\frac{t}{2}+1)\frac{q-1}{t}+1\}$} & \cite{RefJ (2020) X.Fang} \\
        \hline
	\end{tabular}}
 \begin{tablenotes}
     \footnotesize
    \item Note: $p$ is the characteristic of $q$.
    \item Note:  in \cite{RefJ (2016) X.He}, \cite{RefJ (2017)Lem GRS}, \cite{RefJ (2021) Ball determine} and \cite{RefJ (2024) Fang} many quantum MDS codes were also introduced.
    \end{tablenotes}
\end{center}
\end{table}

\begin{table}
\caption{Our new constructions of $[[n,n-2d+2,d]]_q$-quantum MDS codes}
\label{tab:3}
\begin{center}
\resizebox{\textwidth}{73mm}{
	\begin{tabular}{c|ccc}
		\hline
		Forms of $n$  & Length $n$ & Minimum Distance $d$ & References\\
        \hline
        $(q+1)\mid (n-1)$
      & \tabincell{c}{$n=r\frac{q^2-1}{h}+1$, $\frac{q-1}{h}=2\tau+1$,\\ $\tau\geq 1$, $\lceil \frac{h+2}{4}\rceil\leq r\leq \frac{h}{2}$} & $2\leq d\leq (\frac{h}{2}+\frac{1}{2})\frac{q-1}{h}+\frac{3}{2}$ & Theorem \ref{th 11} (1)\\
       \hline
        \multirow{2}{*}{\tabincell{c}{$(q+1)\nmid (n-1)$\\$\frac{q+1}{2}\mid (n-1)$}}
        & \tabincell{c}{$n=r\frac{q^2-1}{2h}+1$, $\frac{q-1}{h}=2\tau+1$,\\ $\tau\geq 1$, odd $\frac{h}{2}+1\leq r\leq h$} & $2\leq d\leq (h+1)\frac{q-1}{2h}+\frac{3}{2}$ & Theorem \ref{th 11} (1)\\
        \cline{2-4}
        \multirow{2}{*}{}
       & \tabincell{c}{$n=r\frac{q^2-1}{2h}+1$, $\frac{q-1}{h}=2\tau+1$,\\ $\tau\geq 1$, odd $h< r< 2h$} & $2\leq d\leq r\frac{q-1}{2h}+\frac{3}{2}$ & Theorem \ref{th 11} (2)\\
       \hline
       \multirow{2}{*}{\tabincell{c}{$(q+1)\nmid n$\\$\frac{q+1}{2}\mid n$}}
        & \tabincell{c}{$n=r\frac{q^2-1}{2h}$, $\frac{q-1}{h}=2\tau+1$,\\ $\tau\geq 1$, odd $\frac{h}{2}+1< r\leq h$} & $2\leq d\leq (h+2)\frac{q-1}{2h}+1$ &Theorem \ref{th 22}\\
        \cline{2-4}
        \multirow{2}{*}{}
       & \tabincell{c}{$n=r\frac{q^2-1}{2h}$, $\frac{q-1}{h}=2\tau+1$,\\ $\tau\geq 1$, odd $h< r< 2h$} & $2\leq d\leq r\frac{q-1}{2h}+\frac{1}{2}$ & Corollary \ref{cor 2} \\
       \hline
        \multirow{2}{*}{\tabincell{c}{$(q-1)\nmid (n-1)$\\$\frac{q-1}{2}\mid (n-1)$}}
        & \tabincell{c}{$n=r\frac{q^2-1}{2h}+1$, $\frac{q+1}{h}=2\tau+1$,\\ $\tau\geq 1$, odd $h< r\leq \frac{3h}{2}$} & $2\leq d\leq (r+1)\frac{q+1}{2h}$ & Theorem \ref{th 33} (1)\\
        \cline{2-4}
        \multirow{2}{*}{}
       & \tabincell{c}{$n=r\frac{q^2-1}{2h}+1$, $\frac{q+1}{h}=2\tau+1$,\\ $\tau\geq 1$, odd $\frac{3h}{2}< r< 2h$} & $2\leq d\leq r\frac{q+1}{2h}-\frac{1}{2}$ & Theorem  \ref{th 33} (2)\\
       \hline
         \multirow{3}{*}{\tabincell{c}{$(q-1)\nmid n$\\$\frac{q-1}{2}\mid n$}}
        & \tabincell{c}{$n=r\frac{q^2-1}{2h}$, $\frac{q+1}{h}=2\tau+1$,\\ $\tau\geq 1$, odd $\frac{h}{2}\leq r\leq h$} & $2\leq d\leq (h+1)\frac{q+1}{2h}-\frac{1}{2}$ & Theorem \ref{th 44} (1)\\
        \cline{2-4}
        \multirow{3}{*}{}
       & \tabincell{c}{$n=r\frac{q^2-1}{2h}$, $\frac{q+1}{h}=2\tau+1$,\\ $\tau\geq 1$, odd $h< r< \frac{3h}{2}$} & $2\leq d\leq (r+2)\frac{q+1}{2h}-\frac{1}{2}$ & Theorem \ref{th 44} (2) \\
       \cline{2-4}
       \multirow{3}{*}{}
       & \tabincell{c}{$n=r\frac{q^2-1}{2h}$, $\frac{q+1}{h}=2\tau+1$,\\ $\tau\geq 1$, odd $\frac{3h}{2}\leq r< 2h$} & $2\leq d\leq (r+1)\frac{q+1}{2h}-1$ & Theorem \ref{th 44} (3) \\
       \hline
	\end{tabular}}
\end{center}
\end{table}

\section{Conclusions}\label{sec6}

In this paper, by Hermitian self-orthogonal GRS codes
we construct some new classes of quantum MDS codes with good parameters (see Theorems \ref{th 11}, \ref{th 22}, \ref{th 33}, \ref{th 44}).
The length of these quantum MDS codes satisfies $n\equiv 0,1($mod$\,\frac{q\pm1}{2})$, which differs from previous conclusions.
All the quantum MDS codes we construct have large minimum distances which are greater than  $q/2+1$.

\section*{Acknowledgments}

This research was supported by the National Natural Science Foundation of China (No.U21A20428 and 12171134).

\section*{Data availability}

All data generated or analyzed during this study are included in this paper.

\section*{Conflict of Interest}

The authors declare that there is no possible conflict of interest.

\section*{References}


\begin{thebibliography}{}


\bibitem{RefJ (1998) introduct 1} A. R. Calderbank, E. M. Rains, P. W. Shor and N. J. A. Sloane, Quantum error correction via codes over GF(4), IEEE Trans. Inf. Theory 44(4), 1369-1387 (1998).

\bibitem{RefJ (1999) introduct 2} E. M. Rains, Nonbinary quantum codes, IEEE Trans. Inf. Theory 45(6), 1827-1832 (1999).

\bibitem{RefJ (2001) introduct 3} A. Ashikhmin and E. Knill, Nonbinary quantum stabilizer codes, IEEE Trans. Inf. Theory 47(7), 3065-3072 (2001).

\bibitem{RefJ (2006) introduct 6} A. Ketkar, A. Klappenecker, S. Kumar, P. Sarvepalli, Nonbinary stabilizer codes over finite fields. IEEE Trans. Inf. Theory 52(11), 4892-4914 (2006).

\bibitem{RefJ (2007) introduct 5} S. A. Aly, A. Klappenecker and P. K. Sarvepalli, On quantum and classical BCH codes, IEEE Trans. Inf. Theory 53(3), 1183-1188 (2007).

\bibitem{RefJ (2011) n=q^2+1(1)} G. G. L. Guardia, New quantum MDS codes, IEEE Trans. Inf. Theory 57(8), 5551-5554 (2011).

\bibitem{RefJ (2015) n=q^2+1(2)} M. Grassl and M. R{\"o}tteler, Quantum MDS codes over small fields, In Proceedings of the International Symposium on Information Theory (ISIT), 1104-1108 (2015).


\bibitem{RefJ (2010) L.jin} L. Jin, S. Ling, J. Luo and C. Xing , Application of classical Hermitian self-orthogonal MDS codes to quantum MDS codes. IEEE. Trans. Inf. Theory 56(9), 4735-4740 (2010).


\bibitem{RefJ (2004) n<q+1(1)} M. R{\"o}ttler, M. Grassl and T. Beth, On quantum MDS codes. In Proceedings of the International Symposium on Information Theory, Chicago, USA, 356 (2004).


\bibitem{RefJ (2004) n<q+1(2)} M. Grassl, T. Beth and M. R{\"o}ttler, On optimal quantum codes, Int. J. Quantum Inf. 2(1), 757-775 (2004).



\bibitem{RefJ (2013) n=q^2+1(3)} X. Kai and S. Zhu, New quantum MDS codes from negacyclic codes, IEEE Trans. Inf. Theory 59(2), 1193-1197 (2013).

\bibitem{RefJ (2014) kai} X. Kai, S. Zhu and P. Li, Constacyclic codes and some new quantum MDS codes, IEEE Trans. Inf. Theory 60(4), 2080-2086 (2014).

\bibitem{RefJ (2014) Zhang} G. Zhang and B. Chen, New quantum MDS codes, Int. J. Quantum Inf. 12(4), 1450019 (2014).

\bibitem{RefJ (2015) L.W} L. Wang and S. Zhu, New quantum MDS codes derived from constacyclic codes, Quantum Inf. Process. 14(3), 881-889 (2015).

\bibitem{RefJ (2015) T.Zhang} T. Zhang and G. Ge, Some new classes of quantum MDS codes from constacyclic codes, IEEE Trans. Inf. Theory 61(9), 5224-5228 (2015).

\bibitem{RefJ (2016) L.Hu} L. Hu, Q. Yue and X. Zhu, New quantum MDS codes from constacyclic codes, Chinese Ann. Math. Ser B. 37, 891-898 (2016).

\bibitem{RefJ (2015) B.chen} B. Chen, S. Ling and G. Zhang, Application of constacyclic codes to quantum MDS codes, IEEE Trans. Inf. Theory 61(3), 1474-1484 (2015).

\bibitem{RefJ (2016) S.Li} S. Li, M. Xiong and G. Ge, Pseudo-cyclic codes and the construction of quantum MDS codes, IEEE Trans. Inf. Theory 62(4), 1703-1710 (2016).


\bibitem{RefJ (2008) n=q^2+1(4)} Z. Li, L.-J. Xing and X.-M. Wang, Quantum generalized Reed-Solomon codes: unified framework for quantum MDS codes, Phys. Rev. A 77(1), 012308-1-012308-4 (2008).

\bibitem{RefJ (2012) L.jin} L. Jin and C. Xing, Euclidean and Hermitian self-orthogonal algebraic geometry codes and their application to quantum codes, IEEE Trans. Inf. Theory, 58(8), 5484-5489 (2012).

\bibitem{RefJ (2014) L.jin} L. Jin and C. Xing, A construction of new quantum MDS codes, IEEE Trans. Inf. Theory 60(5), 2921-2925 (2014).


\bibitem{RefJ (2016) X.He}
X. He, L. Xu, H. Chen, New $q$-ary quantum MDS codes with distances bigger than $\frac{q}{2}$, Quantum Inf. Process. 15(7), 2745-2758 (2016).


\bibitem{RefJ (2017)Lem GRS} T. Zhang and G. Ge, Quantum MDS codes with large minimum distance, Des. Codes Cryptogr. 83(3), 503-517 (2017).



\bibitem{RefJ (2017) X.Shi} X. Shi, Q. Yue and X. Zhu, Construction of some new quantum MDS codes, Finite Fields and Their Appl. 46, 347-362 (2017).


\bibitem{RefJ (2018) X.Shi CC}
X. Shi, Q. Yue, Y. Chang, Some quantum MDS codes with large minimum distance from generalized Reed-Solomon codes, Cryptogr. Commun. 10(6), 1165-1182 (2018).


\bibitem{RefJ (2017) L.Jin} L. Jin, H. Kan and J. Wen, Quantum MDS codes with relatively large minimum distance from Hermitian self-orthogonal codes, Des. Codes Cryptogr. 84(3), 463-471 (2017).

\bibitem{RefJ (2019) F.Tian} F. Tian and S. Zhu, Some new quantum MDS codes from generalized Reed-Solomon codes, Discrete Math. 342(12), 111593 (2019).

\bibitem{RefJ (2018)W. two} W. Fang and F.-W. Fu, Two new classes of quantum MDS codes, Finite Fields and Their Appl. 53, 85-98 (2018).

\bibitem{RefJ (2019)W. some} W. Fang and F.-W. Fu, Some new constructions of quantum MDS codes, IEEE Trans. Inf. Theory 65(12), 7840-7847 (2019).


\bibitem{RefJ (2020)W. hull} W. Fang, F.-W. Fu, L. Li and S. Zhu,  Euclidean and Hermitian Hulls of MDS Codes and Their Applications to EAQECCs, IEEE Trans. Inf. Theory 66(6), 3527-3537 (2020).

\bibitem{RefJ (2020) X.Fang} X. Fang and J. Luo, New quantum MDS codes over Finite Fields. Quantum Inf. Process. 19(1), 16 (2020).


\bibitem{RefJ (2018) Guo}
G. Guo, R. Li, L. Guo, On the construction of quantum MDS codes, Int. J. Theor. Phys. 57(11), 3525-3539 (2018).


\bibitem{RefJ (2021) application} G. Guo, R. Li and Y. Liu, Application of Hermitian self-orthogonal GRS codes to some quantum MDS codes, Finite Fields and Their Appl. 76, 101901 (2021).

\bibitem{RefJ (2021) n=q^2+1(5)} S. Ball, Some constructions of quantum MDS codes, Des. Codes Cryptogr. 89(5), 811-821 (2021).

\bibitem{RefJ (2021) Ball determine}
S. Ball, R. Vilar, Determining when a truncated generalised Reed-Solomon code is Hermitian self-orthogonal, IEEE Trans. Inf. Theory 68(6), 3796-3805 (2022).


\bibitem{RefJ (2021) Liu}
H. Liu, X. Liu, Constructions of quantum MDS codes, Quantum Inf. Process. 20(1), 14 (2021).


\bibitem{RefJ (2023) R. Wan} R. Wan and S. Zhu, New Quantum MDS codes from Hermitian self-orthogonal generalized Reed-Solomon codes, Feb. 2023, arXiv:2302.06169. [Online]. Available: http://arxiv.org/abs/2302.06169.


\bibitem{RefJ (2024) Fang}
W. Fang, J. Wen and F.-W. Fu, Quantum MDS codes with new length and large minimum distance, Discrete Math. 347(1), 113662 (2024).





\end{thebibliography}
\end{document}